%for A4 paper format use option "a4paper", for US-letter use option "letterpaper"
%for british hyphenation rules use option "UKenglish", for american hyphenation rules use option "USenglish"
%for section-numbered lemmas etc., use "numberwithinsect"
%for anonymousing the authors (e.g. for double-blind review), add "anonymous"
%for enabling thm-restate support, use "thm-restate"
%for producing a PDF according the PDF/A standard, add "pdfa"

%\pdfoutput=1 %uncomment to ensure pdflatex processing (mandatatory e.g. to submit to arXiv)
%\graphicspath{{./graphics/}}%helpful if your graphic files are in another directory

\documentclass[a4paper,UKenglish,cleveref, autoref, thm-restate]{lipics-v2021}

\hideLIPIcs %uncomment to remove references to LIPIcs series (logo, DOI, ...), e.g. when preparing a pre-final version to be uploaded to arXiv or another public repository
% ==================== =Imported packages =================================
\usepackage{amsmath}
\usepackage{amsbsy}
\usepackage{amssymb}
\usepackage{algorithm}

\usepackage{multicol}

\usepackage{graphicx}
\usepackage{cite}

\usepackage{tabularx}

\usepackage[table,xcdraw]{xcolor}

\usepackage{mathtools}
\usepackage{listings}
\usepackage[noend]{algpseudocode}

\newtheorem{lem}{Lemma}
\newtheorem*{relem}{Lemma}

\newtheorem{cor}{Corollary}
\newtheorem{property}{Property}

\lstset{ 
    language=C++, % choose the language of the code
    basicstyle=\fontfamily{pcr}\selectfont\footnotesize\color{black},
    keywordstyle=\color{black}\bfseries, % style for keywords
    numbers=right, % where to put the line-numbers
    numberstyle=\tiny, % the size of the fonts that are used for the line-numbers     
    backgroundcolor=\color{white},
    showspaces=false, % show spaces adding particular underscores
    showstringspaces=false, % underline spaces within strings
    showtabs=false, % show tabs within strings adding particular underscores
    frame=none, % adds a frame around the code
    tabsize=2, % sets default tabsize to 2 spaces
    rulesepcolor=\color{gray},
    rulecolor=\color{black},
    captionpos=b, % sets the caption-position to bottom
    breaklines=true, % sets automatic line breaking
    breakatwhitespace=false,
}

\lstset{
 morekeywords={D_F}
}

% ==================== Customized commands =================================
\algrenewcommand{\algorithmiccomment}[1]{\hfill// #1}
\newcommand{\myparagraph}[1]{\smallskip\noindent\textbf{#1}}

\newcommand{\new}[1]{\textbf{{#1}}}

\newcommand{\figb}[1]{\textcolor{black}{\textbf{{#1}}}}

\newcommand{\myskip}[1]{}

\newcolumntype{P}[1]{>{\centering\arraybackslash}p{#1}}

% % ===== Bold not Larger! =====
\newsavebox\CBox

% ===== Table use =====

\usepackage{booktabs}
\usepackage{multirow}
\usepackage{graphicx}

\usepackage{thmtools}

% ========================= DEBUG USE =====================================

%% =========================================================================

% Appendix move package
\usepackage{environ}

\newif\ifappendix
\newif\ifapprox

\NewEnviron{maybeappendix}[1]
    {\ifappendix
        \expandafter\global\expandafter\let\csname putmaybeappendix#1\endcsname\BODY%
    \else
        \expandafter\newcommand\csname putmaybeappendix#1\endcsname{}\BODY%
    \fi}
\newcommand{\putmaybeappendix}[1]{\csname putmaybeappendix#1\endcsname}

\NewEnviron{maybeapprox}[1]
    {\ifapprox
        \expandafter\global\expandafter\let\csname
        putmaybeapprox#1\endcsname\BODY%
    \else
        \expandafter\newcommand\csname putmaybeapprox#1\endcsname{}\BODY%
    \fi}
\newcommand{\putmaybeapprox}[1]{\csname putmaybeapprox#1\endcsname}
% End of definitions

\appendixfalse %% False for full proofs
\approxfalse   %% False for full experiments

% \appendixtrue %% false for full proofs
% \approxtrue   %% false for full experiments
%%

\title{Fast Kd-trees for the Kullback--Leibler Divergence and other Decomposable Bregman Divergences}

\titlerunning{Kd-trees for Decomposable Bregman Divergences}

%TODO mandatory, please use full name; only 1 author per \author macro; first two parameters are mandatory, other parameters can be empty. Please provide at least the name of the affiliation and the country. The full address is optional. Use additional curly braces to indicate the correct name splitting when the last name consists of multiple name parts.

\author{Tuyen Pham}{University of Florida, Gainesville, US}{tuyen.pham@ufl.edu}{}{}
\author{Hubert Wagner}{University of Florida, Gainesville, US}{hwagner@ufl.edu}{}{}

\authorrunning{T. Pham and H. Wagner} %TODO mandatory. First: Use abbreviated first/middle names. Second (only in severe cases): Use first author plus 'et al.'

\Copyright{Tuyen Pham, Hubert Wagner} %TODO mandatory, please use full first names. LIPIcs license is "CC-BY";  http://creativecommons.org/licenses/by/3.0/

\ccsdesc[100]{F.2.2 Nonnumerical Algorithms and Problems}
\ccsdesc[100]{Theory of computation~Computational geometry}
\ccsdesc[100]{Mathematics of computing~Combinatorial algorithms}
%TODO mandatory: Please choose ACM 2012 classifications from https://dl.acm.org/ccs/ccs_flat.cfm 

\keywords{Kd-tree, k-d tree, nearest neighbour search, Bregman divergence, decomposable Bregman divergence, KL divergence, relative entropy, cross entropy, Shannon's entropy} %TODO mandatory; please add comma-separated list of keywords

\category{} %optional, e.g. invited paper

\relatedversion{} %optional, e.g. full version hosted on arXiv, HAL, or other respository/website
%\relatedversiondetails[linktext={opt. text shown instead of the URL}, cite=DBLP:books/mk/GrayR93]{Classification (e.g. Full Version, Extended Version, Previous Version}{URL to related version} %linktext and cite are optional

\supplement{}%optional, e.g. related research data, source code, ... hosted on a repository like zenodo, figshare, GitHub, ...
%\supplementdetails[linktext={opt. text shown instead of the URL}, cite=DBLP:books/mk/GrayR93, subcategory={Description, Subcategory}, swhid={Software Heritage Identifier}]{General Classification (e.g. Software, Dataset, Model, ...)}{URL to related version} %linktext, cite, and subcategory are optional

\funding{2022 Google Research Scholar Award in Algorithms and Optimization.}%optional, to capture a funding statement, which applies to all authors. Please enter author specific funding statements as fifth argument of the \author macro.

\acknowledgements{}%optional

\begin{CCSXML}
<ccs2012>
<concept>
\ccsdesc{Theory of computation~Computational geometry}
\ccsdesc{Mathematics of computing~Combinatorial algorithms}
</concept>
</ccs2012>
\end{CCSXML}

\nolinenumbers %uncomment to disable line numbering

%Editor-only macros:: begin (do not touch as author)%%%%%%%%%%%%%%%%%%%%%%%%%%%%%%%%%%
\EventEditors{Xavier Goaoc and Michael Kerber}
\EventNoEds{2}
\EventLongTitle{The 39th International Symposium on Computational Geometry}
\EventShortTitle{SoCG 2023}
\EventAcronym{SoCG}
\EventYear{2022}
\EventLogo{figs/socg-logo}
\SeriesVolume{224}
\ArticleNo{63}
%%%%%%%%%%%%%%%%%%%%%%%%%%%%%%%%%%%%%%%%%%%%%%%%%%%%%%

\begin{document}

\maketitle

\begin{abstract}
The contributions of the paper span theoretical and implementational results. First, we prove that Kd-trees can be extended to spaces in which the distance is measured with an arbitrary Bregman divergence. Perhaps surprisingly, this shows that the triangle inequality is not necessary for correct pruning in Kd-trees. Second, we offer an efficient algorithm and C++ implementation for nearest neighbour search for decomposable Bregman divergences.

The implementation supports the Kullback--Leibler divergence (relative entropy) which is a popular distance between probability vectors and is commonly used in statistics and machine learning. This is a step toward broadening the usage of computational geometry algorithms.

Our benchmarks show that our implementation efficiently handles both exact and approximate nearest neighbour queries. Compared to a naive approach, we achieve two orders of magnitude speedup for practical scenarios in dimension up to 100. Our solution is simpler and more efficient than competing methods.
\end{abstract}

\section{Motivation}
Nearest neighbour search is a fundamental method offered by computational geometry, with applications in a wide range of fields. Bentley's k-dimensional tree, Kd-tree for short, is among the simplest and most practical data structures for this task. 

%Like many other computational geometry techniques, Kd-trees were designed for the Euclidean space, and later extended to more general metric spaces. However, many modern geometric problems --- particularly in widely construed data science --- rely on distances that are not proper metrics. For example, it is common to represent data as a collection of probability vectors measured with specialized dissimilarities. Consequently, standard geometric algorithms do not work with non-metric distances --- but they can often be extended to such settings. 
Like many other computational geometry techniques, Kd-trees were initially designed for Euclidean space and later extended to more general metric spaces. However, many modern geometric problems, particularly in data science, use distances that are not proper metrics. For example, it is common to represent data as probability vectors and use specialized dissimilarities to measure distances between them. While standard geometric algorithms do not work with non-metric distances, they can often be extended to such settings.

Indeed, it is interesting that many computational geometry algorithms  -- that are typically assumed to require a metric -- can work with significantly weaker assumptions. We will mention prominent examples momentarily, many of which originated at SoCG, and prove that the above statement extends to Kd-trees. In particular, correctness and efficiency guarantees can be retained under much weaker assumptions than typically assumed. Specifically, for dissimilarity measures that are asymmetric and do not fulfill the triangle inequality.

%\myparagraph{A brief detour into applications.} One practical example of such a non-metric distance is the Kullback--Leibler (KL) divergence~\cite{kullback1951information}. Inspired by information theory~\cite{shannon1948mathematical}, it is the standard way of comparing discrete probability distributions. In practice it is, for example, used as a loss function minimized in machine learning~\cite{van2008visualizing, mcinnes2018umap, murphy2012machine}. Answering approximate nearest neighbours queries is becoming an important component of modern machine learning. In particular, retrieval-augmented generation (RAG) is an attempt to make large language models more robust by searching for existing documents supporting generated answers. This is done by a nearest neighbour search. Heuristic methods that lack performance guarantees are typically used --- recently Indyk and Xu~\cite{indyk23} warned that the methods used in such contexts can catastrophically fail.  Regardless, there is a growing field of \emph{vector databases} focusing on supporting nearest neighbour queries~\cite{han2023comprehensive}. Most major databases are starting to support such geometric operations.
\myparagraph{Applications.} One practical example of such a non-metric distance is the Kullback--Leibler (KL) divergence~\cite{kullback1951information}. Originating in information theory~\cite{shannon1948mathematical}, the KL divergence is a standard way of comparing discrete probability distributions (probability vectors). For example, it is used as a loss function minimized in machine learning~\cite{van2008visualizing, mcinnes2018umap, murphy2012machine}. Approximate nearest neighbours queries are becoming an increasingly important component of modern machine learning. In particular, retrieval-augmented generation (RAG) aims to improve large language models by searching for existing documents to support generated answers. This is done by a nearest neighbour search within probability vectors. Typically, heuristic methods that lack performance guarantees are used. Recently Indyk and Xu~\cite{indyk23} warned that the methods used in such contexts can catastrophically fail. Regardless, there is a growing field of \emph{vector databases} focusing on supporting nearest neighbour queries~\cite{han2023comprehensive}.

\myparagraph{Problem statement.}
%Guided by curiosity and the above practical considerations, we revisit the topic of nearest neighbour search. We focus on algorithms that offer exact answers, as well as approximate search with guarantees. We are interested in non-metric distance measurements, in particular geometries induced by the so-called Bregman divergences --- primarily because the KL divergence turns out to be a member of this family. 
We revisit the topic of nearest neighbour search, focusing on algorithms that provide exact answers as well as approximate results with guarantees. In particular, we investigate non-metric geometries induced by Bregman divergences --- of which the KL divergence is a prominent member.

Given a finite collection of points $X \subset \Omega \subset \mathbb{R}^d$ and a query point $q \in \Omega$, select $k$ points from $X$ with the smallest distance to $q$. Specifically, the distance will be measured using a Bregman divergence, which we discuss in Section~\ref{sec:breg}. We will design and implement a modified Kd-tree for answering queries in this setting.

\myparagraph{Contributions.} We list the contributions of this paper:
\begin{enumerate}
    \item Theoretical results on correctness and efficiency of Kd-tree queries in the setting of Bregman divergences.
    \item The first implementation of Kd-trees for Bregman divergences. It is currently the fastest method for exact k-nearest neighbour queries in the Bregman setting and works for arbitrary decomposable divergences\footnote{We are working on incorporating our implementation into the popular scikit-learn~\cite{scikit-learn} library, which will increase both the generality and efficiency of its existing Kd-trees implementation.}.
    \item Benchmarks showing the method is usable in practical situations.
%    \item Motivate practical uses of Bregman divergences, focusing on the KL divergence.
\end{enumerate}
%In particular, the last two points go beyond Kd-trees. We hope that the research on Bregman divergences can be reinvigorated in the computational geometry community and that robust geometric tools could be more widely used to solve practical problems.

%\myparagraph{Structure of the paper.}

\section{Related work} \label{sec:related}
Kd-Trees were introduced by Jon Bentley in 1975~\cite{Bentley1975MultidimensionalBS}. Further improvements were made by him, Friedman and Finkel~\cite{kd2_bentley} and many others. Bregman divergences were introduced by Lev Bregman in 1967~\cite{BREGMAN1967200}.

\ifapprox
    Many computational geometry techniques have been extended to operate with Bregman divergences instead of a metric. In the context of nearest neighbour search, Cayton blazed the trail by extending~\cite{CaytonBBTrees} ball-trees and implementing prototype software for the KL and IS divergences. He also proved theoretical results towards extending Kd-trees~\cite{caytonphd2009}, which we strengthen as well as provide algorithms and an efficient implementation. In turn, Nielsen, Piro, and Barlaud extended Vantage point trees~\cite{5202635, VPTree}. The same authors further improved Bregman ball-trees~\cite{Bregman_Ball_Trees, BBTreepp}. R-trees and VA-files were extended by Zhang and collaborators~\cite{RTreesVAFiles}, inspiring Song and collaborators~\cite{song2020brepartition}. Many approximate search methods have been adapted to the Bregman setting. We discuss related works and present experimental comparisons in Appendix~\ref{sec:approxDisc}.
\else
    Many computational geometry techniques have been extended to operate with Bregman divergences instead of a metric. In the context of nearest neighbour search, Cayton blazed the trail by extending~\cite{CaytonBBTrees} ball-trees and implementing prototype software for the KL and IS divergences. He also proved theoretical results towards extending Kd-trees~\cite{caytonphd2009}, which we strengthen as well as provide algorithms and an efficient implementation. In turn, Nielsen, Piro, and Barlaud extended Vantage point trees~\cite{5202635, VPTree}. The same authors further improved Bregman ball-trees~\cite{Bregman_Ball_Trees, BBTreeGithub}. R-trees and VA-files were extended by Zhang and collaborators~\cite{RTreesVAFiles}, inspiring Song and collaborators~\cite{song2020brepartition}. Ring-trees combined with a quad-tree decomposition have been proven to work sublinearly for finding approximate nearest neighbours by Abdullah, Moeller, and Venkatasubramanian~\cite{Bregman_ring_tree}. In 2013, Boytsov and Naidan developed their own Bregman VP-trees extension~\cite{BoytsovNaidan_VPTrees} for approximate nearest neighbours. Naidan later incorporated his VP-trees and Cayton's ball-trees into the Non-Metric Space Library (NMSLIB)\cite{nmslib}. This library also includes other approximate Bregman similarity searches including small world graphs~\cite{MalkovPonomarenkoLogvinovKrylov_SWG}. The hierarchical navigable small world graph has been a popular choice for similarity searches in vector databases~\cite{HNSW_Zilliz, HNSW_MariaDB, HNSW_MongoDB} and perform well in benchmarks for metrics~\cite{ANN_Benchmarks}. However, its implementation in NMSLIB is currently experimental for Bregman divergences~\cite{HNSW_Git}.
    Recently Abdelkader, Arya, da Fonseca and Mount proposed an approach to proximity search in non-metric settings, which includes Bregman divergences~\cite{abdelkader2019approximate}; as we understand it, this has not yet been implemented.
\fi

More broadly, Banerjee and collaborators extended $k$-means clustering~\cite{JMLR:v6:banerjee05b} to arbitrary Bregman divergences -- with the surprising twist that the existing algorithm works without changes. Coresets have also been extended to the Bregman setting by Ackermann and Bl{\"o}mer~\cite{Ackermann_Blomer_k_median}. Nielsen, Boissonnat, and Nock developed Bregman Voronoi diagrams and Delaunay triangulations~\cite{Bregman_Voronoi}. Edelsbrunner and Wagner extended topological data analysis methods to the Bregman case~\cite{EdWa16}. 

% This should be here. The software I mention below also includes the below things.
%On a slightly different note, we mention some developments related to geometries on the standard simplex coming from the computational geometry community. One example is the work of Nielsen and Shao, considering Hilbert geometries on the simplex~\cite{HSGball-2017}. Kyng, Phillips and Venkatasubramanian showed Johnson--Lindenstrauss-type results on the simplex~\cite{kyng2010johnson, kyng2011dimensionality}.

In the Euclidean case, robust software is available for all of these techniques.
One popular package in the Euclidean case is the ANN library by Mount and Arya~\cite{ANN_kd, ANN_Manual, ANN_boundary, ANN_optimalAlgorithm}. Our current implementation is inspired by this library. % which we will compare with the Bregman ball-tree implementation by Cayton~\cite{BBTreeGithub} and heuristic methods in NMSLIB~\cite{nmslib}.

\section{Background on Bregman divergences}
\label{sec:breg}
We begin by setting up definitions for Bregman divergences~\cite{BREGMAN1967200}, which we use as a measure of distance. These divergences are usually asymmetric and do not generally satisfy the triangle inequality -- and as such do not define a proper metric. Despite this limitation, decomposable Bregman divergences will efficiently work with Kd-trees with minimal changes. 

Each Bregman divergence is parametrized by a convex function with particular properties~\cite{bauschke1997legendre}. We set the stage by letting $\Omega\subseteq\mathbb{R}^d$ be an open convex set. Next, we define a \new{function of Legendre type}~\cite{Rockafellar+1970} as a function $F: \Omega \to \mathbb{R}$ that is: \nopagebreak
\begin{enumerate}[I] \nopagebreak
    \item differentiable and \nopagebreak
    \item strictly convex. \nopagebreak
    \item We additionally require that $\lim\limits_{x\to \partial\,\Omega}\|\nabla F(x)\|=\infty$, provided $\partial\,\Omega$ is nonempty. \nopagebreak
\end{enumerate}
The third requirement is often omitted, but will prove important in Section~\ref{sec:correctnessofpruning}.

Given a function $F$ of Legendre type, the \new{Bregman divergence generated} by $F$ is a function $D_F : \Omega \times \Omega \to \mathbb{R}$. The value of the divergence between $x$ and $y$ is the difference between $F(x)$ and the best affine approximation of $F$ at $y$ also evaluated at $x$, or simply
\begin{align}
    D_F(x\|y)=F(x)-(F(y)+\langle\nabla F(y), x-y\rangle).
\end{align}
See Figure~\ref{fig:bregman_div} for an illustration. We refer to $D_F(x\|y)$ as the divergence in the \new{direction} from $x$ to $y$. Due to the lack of symmetry, we will be mindful about the direction in which we compute it.

All Bregman divergences fulfill the following
\begin{property}[Bregman Nonnegativity]\label{positivity}
$D_F(x\|y) \ge 0$ for each $x,y \in \Omega$, with equality if and only if $x=y$.
\end{property}

Despite failing to satisfy the requirements for a metric, various computational geometry algorithms extend to Bregman divergences. Also, despite the seemingly simple definition, the resulting divergences have interesting properties and interpretations.% Specifically, in~\cref{sec:KL_interpretation} we look into an interpretation and usage of the KL divergence.

\begin{figure}
    \centering
    \includegraphics[width = .4\textwidth]{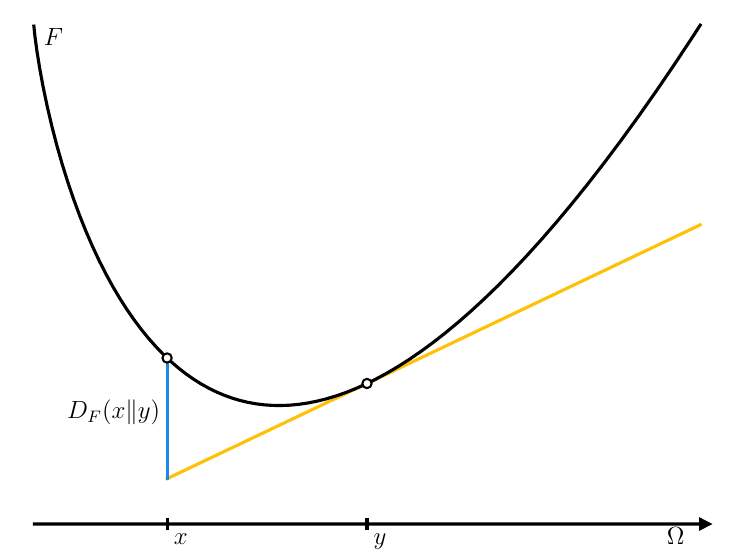}
    \caption{Visualization of a Bregman divergence construction for a one-dimensional domain.}
    \label{fig:bregman_div}
\end{figure}

\myparagraph{Decomposable Bregman divergences.} Our focus is on a sub-family of Bregman divergences called \new{decomposable Bregman divergence}~\cite{Decomp_def1, Decomp_def2}. They are generated by a function $F = \sum_{i=1}^d f_i$, where each $f_i$ is a univariate function of Legendre type. The function, $F$, generates a Bregman divergence of the form $D_{F}(x\|y)=\sum_{i=1}^{d}D_{f_i}(x_i\|y_i)$ for $x_i,y_i$ lying in the domain of $f_i$~\cite{DBLP:journals/corr/abs-1810-09113}. Most divergences used in practice belong to this family.

\begin{table}
    \centering
    \caption{List of common decomposable Bregman divergences.}
    \begin{tabular}{@{}clll@{}}
    \toprule
    Domain & Generator      & Divergence     & Name                          \\ \midrule
    $\mathbb{R}^d$    & $\sum_{i=1}^d x_i^2$          & $\sum_{i=1}^d(x_i-y_i)^2$                                       & Squared Euclidean                   \\
    $\mathbb{R}_+^d$  & $-\sum_{i=1}^dx_i\log_{2}\frac{1}{x_i}$  & $\sum_{i=1}^dx_i\log_{2}\frac{x_i}{y_i} + \frac{y_i-x_i}{\ln2}$ & Generalized Kullback--Leibler (GKL) \\
    $\triangle^{d-1}$ & $-\sum_{i=1}^dx_i\log_2\frac{1}{x_i}$    & $\sum_{i=1}^dx_i\log_2\frac{x_i}{y_i}$                          & Kullback--Leibler (KL)              \\
    $\mathbb{R}_+^d$  & $-\sum_{i=1}^d\log x_i$       & $\sum_{i=1}^d\frac{x_i}{y_i}-\log\frac{x_i}{y_i} - 1$           & Itakura--Saito (IS)                 \\
    $\mathbb{R}^d_+$        & $-\sum_{i=1}^d-\sqrt{x}$ & $\sum_{i=1}^d \frac{\sqrt{y_i}}{2} + \frac{x_i}{2\sqrt{y_i}}- \sqrt{x_i}$  & Bhattacharyya-Like                      \\ 
    $\Omega_1\cap\Omega_2$& $\lambda F_1+(1-\lambda)F_2$&$\lambda D_{F_1}+(1-\lambda)D_{F_2}$& Hybrid divergence\\\bottomrule
    \end{tabular}
    \label{tab:Decomp_Bregmans}
\end{table}
We list common decomposable Bregman divergences in Table~\ref{tab:Decomp_Bregmans}. The most commonly used decomposable Bregman divergence is the \new{squared Euclidean distance}. Of particular interest is the \new{generalized Kullback--Leibler divergence} (GKL) defined $\mathbb{R}_{+}^d$ and generated by the negative Shannon entropy. It reduces to the standard KL divergence for points on the \new{open standard simplex}, $\triangle^{d-1}=\{x\in\mathbb{R}_+^d:\sum_{i=1}^{d}x_i = 1\}$.  Another example is the \new{Itakura--Saito} (IS) divergence~\cite{itakura1968analysis}, generated by the negative Burg's entropy. It is useful for working with speech and sound data~\cite{do2002wavelet}. There are many other decomposable Bregman divergences, some inspired by popular tool in statistics such as the Bhattacharyya distance. Additionally, given functions of Legendre type, $F_1$ and $F_2$, we can form a hybrid divergence, which can be viewed as an interpolation between the two Bregman divergences.

One outlier is the squared Mahalanobis distance~\cite{mahalanobis1936generalised} which is popular in statistics. While not decomposable, nearest neighbour problems involving this divergence can be reduced to the squared Euclidean distance that is decomposable. Another one is the KL divergence on the \emph{closed simplex}. While it does not fall under our definition, it can be treated as a limit case of the KL on the open simplex. 

Overall, restricting our attention to decomposable divergences over open domains is not going to limit the choice of divergences handled in practice.

\myparagraph{Bregman balls.} Due to the asymmetry, one can define two types of Bregman balls~\cite{BregmanBallDef}. We start from the \new{primal Bregman ball} of radius $r\geq 0$ centered at $q$ which is defined as 
\begin{align}
    B_{F}(q;r)=\{y\in \Omega: D_F(q\|y)\leq r\}.
\end{align} 
Namely, it is the collection of points with Bregman divergence measured \emph{from} the center not exceeding $r$. See~\cref{fig:GKL_IS_Open_Ball} for an illustration.
%Primal Bregman balls have a particularly nice geometric interpretation: given a light source at point $(q, F(q)-r)$, the primal ball $B_F(q;r)$ is the \emph{illuminated} part of the graph of $F$ projected vertically onto $\Omega$.

%In contrast to Euclidean balls, Bregman balls have a less uniform geometry, and in some cases may even be nonconvex.
\begin{figure}
    \centering
    \includegraphics[scale=0.35]{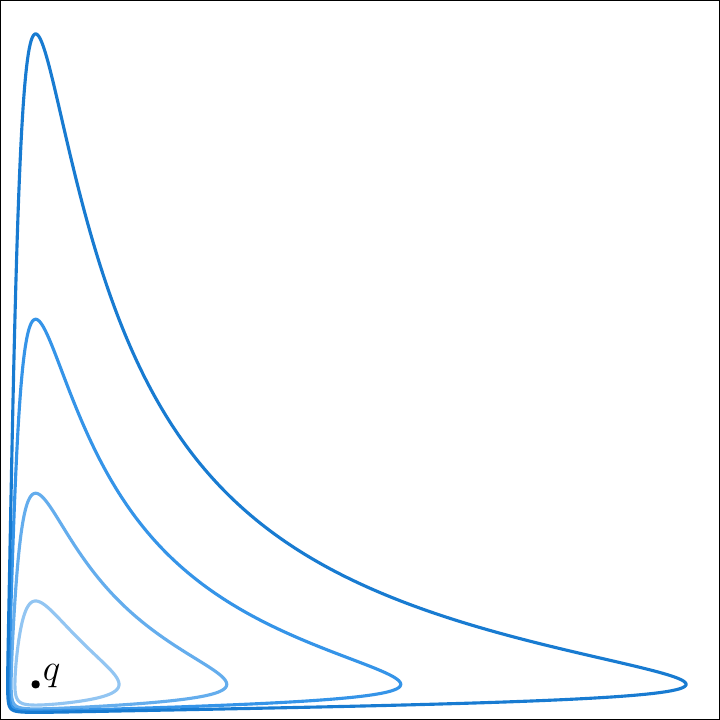}\hfil
    \includegraphics[scale=0.35]{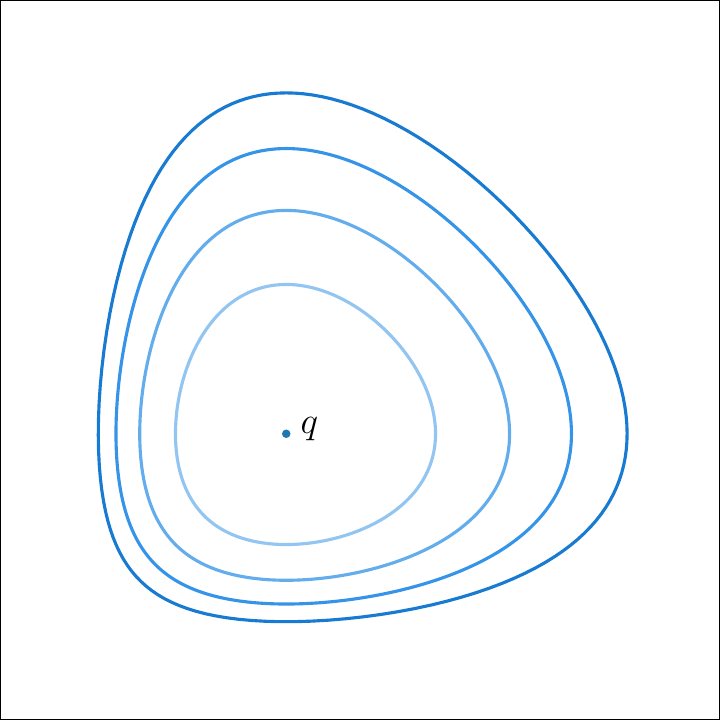}
    \caption{Left: primal Itakura--Saito balls. Right: primal generalized Kullback--Leibler balls.}
    \label{fig:GKL_IS_Open_Ball}
\end{figure}

The \new{dual Bregman ball} of radius $r\geq0$ centered at $q$ is defined as
\begin{align}
    B_F'(q;r) = \{y\in\Omega: D_F(y\|q)\leq r\}.
\end{align}
%The dual ball has a similarly nice geometric interpretation. Given the tangent plane at $(q, F(q))$, we shift it up by $r$, chopping off the upper portion of the graph of $F$. The dual Bregman ball $B'_F(q;r)$ is the remaining part projected vertically onto $\Omega$.% Both geometric interpretations are illustrated in Figure~\ref{fig:PrimalDualBall_def}.
%\begin{figure}
%    \centering
%    \includegraphics[width = .49\textwidth]{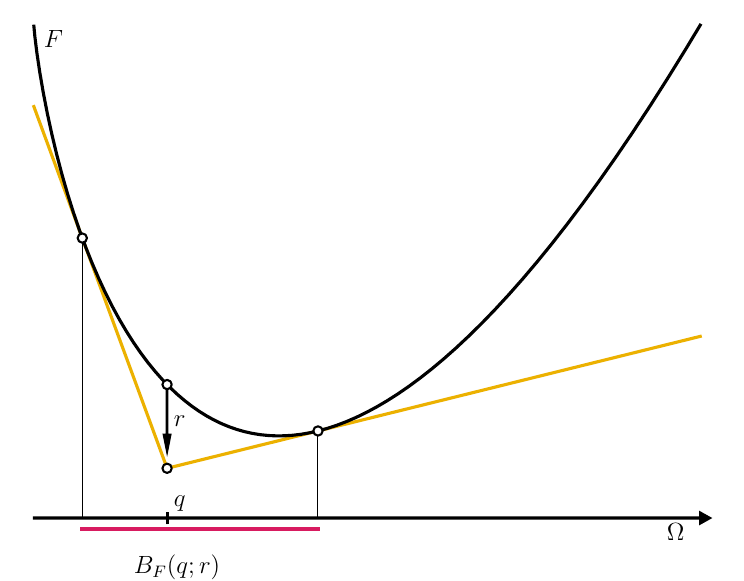}
%    \includegraphics[width = .49\textwidth]{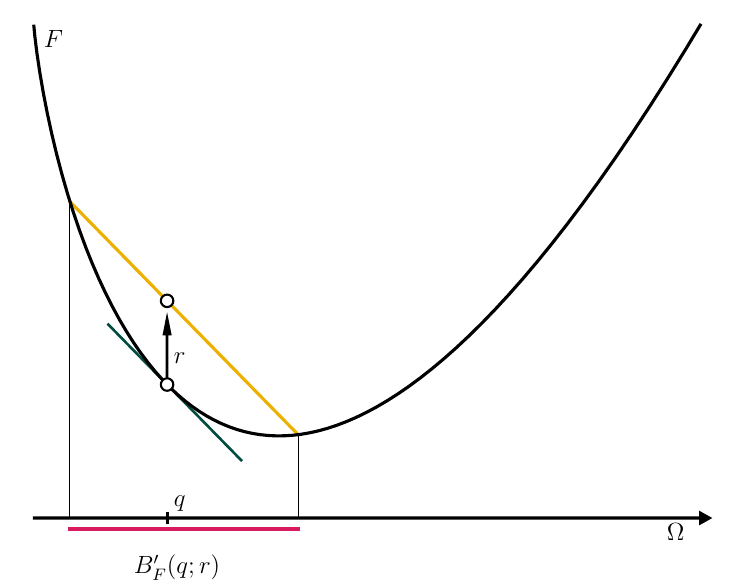}
%    \caption{Geometric interpretation of a primal Bregman ball (left) and the dual Bregman ball (right) in dimension 1}
%    \label{fig:PrimalDualBall_def}
%\end{figure}

%\begin{figure}
%    \centering
%    \includegraphics[width=0.8\textwidth]{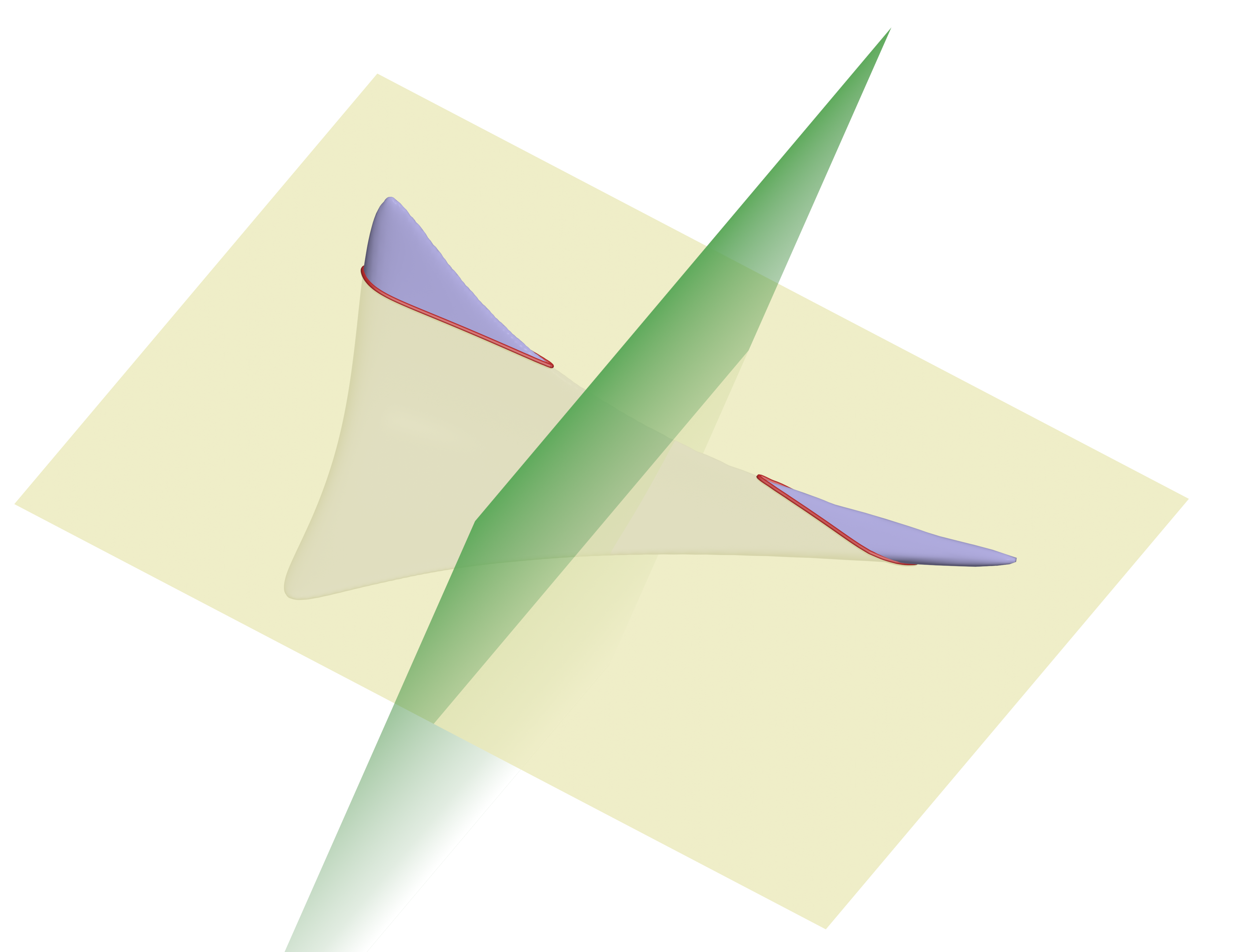}
%    \caption{The blue, partially obstructed, object is a primal Itakura--Saito ball at $(0.2, 0.6, 0.2)$ with radius $2.0$. Its restriction to the brighter hyperplane has two connected components,  outlined in thick red, separated by the other hyperplane. Is this restriction a Bregman ball? \\ If Bregman balls displayed similar pathological behaviours, Kd-tree may incorrectly prune nearest neighbour candidates. A significant portion of the paper aims to keep such cases at bay.}
%    \label{fig:bad}
%\end{figure}

As seen in Figure~\ref{fig:GKL_IS_Open_Ball}, and observed in~\cite{Bregman_Voronoi}, primal Bregman balls can be non-convex (when viewed as a subset of Euclidean space).  It is reasonable to question if all balls are necessarily connected. %In Figure~\ref{fig:bad} we hint at situations in which this could potentially occur.
In Section~\ref{sec:correctnessofpruning} we will show that the balls are indeed connected, and emphasize why this property is crucial. 

\myparagraph{Bregman projections.}
While different in many aspects from metrics, Bregman divergences often exhibit familiar behaviours. We mention standard results related to projections, which we sharpen in the subsequent sections.

Given a Bregman divergence $D_F$, we consider the \new{Bregman projection} to a point $q$ from a nonempty $C \subset \Omega$: 
\begin{align*}
    \operatorname{proj}_F(q, C) &= \operatornamewithlimits{arginf}_{x\in C}D_{F}(x\|q).
\end{align*}
When $C$ is closed and convex, this projection exists and is unique. In this case, we declare this point to be the Bregman projection of $q$ onto $C$. In analogy with projection distance, we define the Bregman \new{projection divergence} as $D_F(\text{proj}_F(q, C)\|q)$, the infimum of divergences from $C$ to $q$. We state the following useful statement~\cite{Bregman_Voronoi}. %First, we fix notation. Referring to an affine subspace in $\Omega$, we mean affine in $\mathbb{R}^d$ and consider its restriction to $\Omega$.

\begin{lem}[Bregman Projection~\cite{Bregman_Voronoi}] \label{lem:proj}
    Given a nonempty closed convex set $C\subset\Omega$ and $q\in \Omega$, denote $q_C = \operatorname{proj}_F(q, C)$. For all $x\in C$:
    \begin{align*}
        D_F(x\|q)\geq D_{F}(x\|q_C) + D_{F}(q_C\|q).
    \end{align*}
    If $C$ is an affine subspace, the above is an equality.
\end{lem}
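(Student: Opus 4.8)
The plan is to reduce the inequality to the classical \emph{three-point identity} for Bregman divergences, and then to invoke a first-order optimality condition at the projected point. As a first step I would record the identity: for any $x,u,v\in\Omega$, expanding the definition of $D_F$ and cancelling the $F(x)$ and $F(u)$ contributions gives
\begin{align*}
    D_F(x\|v) = D_F(x\|u) + D_F(u\|v) + \langle \nabla F(u)-\nabla F(v),\, x-u\rangle .
\end{align*}
This is a purely algebraic manipulation, requiring neither convexity nor Bregman nonnegativity. Substituting $u = q_C$ and $v = q$, the lemma becomes equivalent to the assertion that $\langle \nabla F(q_C)-\nabla F(q),\, x-q_C\rangle \ge 0$ for every $x\in C$, with equality whenever $C$ is an affine subspace.

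Next I would establish this variational inequality from the optimality of $q_C$. Consider the map $g(y) = D_F(y\|q) = F(y) - F(q) - \langle\nabla F(q),\, y-q\rangle$ on $\Omega$; it is differentiable with $\nabla g(y) = \nabla F(y) - \nabla F(q)$, and by definition of $\operatorname{proj}_F$ the point $q_C$ minimises $g$ over $C$. Since $C\subseteq\Omega$ is convex and $q_C\in C$, for each $x\in C$ the segment $q_C + t(x-q_C)$, $t\in[0,1]$, stays in $C$ (hence in $\Omega$), and $g$ is genuinely differentiable at the interior point $q_C$. Comparing $g(q_C + t(x-q_C)) \ge g(q_C)$, dividing by $t>0$, and letting $t\to 0^+$ yields $\langle \nabla g(q_C),\, x-q_C\rangle \ge 0$, which is exactly what is needed. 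Combined with the three-point identity this proves $D_F(x\|q) \ge D_F(x\|q_C) + D_F(q_C\|q)$.

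For the affine case, if $C$ is an affine subspace and $x\in C$, then the reflected point $2q_C - x$ also lies in $C$; applying the variational inequality to both $x$ and $2q_C-x$ forces $\langle \nabla F(q_C)-\nabla F(q),\, x-q_C\rangle$ to be simultaneously $\ge 0$ and $\le 0$, hence zero, so the three-point identity upgrades the inequality to an equality. I expect the only delicate point is confirming that $q_C$ actually lies in the open set $\Omega$, so that $F$ (and hence $g$) is differentiable there and the one-sided derivative argument is legitimate; but since existence and uniqueness of $q_C$ for closed convex $C$ are already granted in the surrounding text --- a consequence of the Legendre-type hypotheses, in particular condition III controlling $\nabla F$ near $\partial\Omega$ --- this is a routine verification rather than a genuine obstacle. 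The real content of the proof is the three-point identity together with convex optimality.
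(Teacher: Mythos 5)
The paper does not prove this lemma at all --- it is imported verbatim from the cited Bregman Voronoi literature --- so there is no in-paper proof to compare against. Your argument is correct and is essentially the canonical one from that literature: the three-point identity reduces the claim to the variational inequality $\langle \nabla F(q_C)-\nabla F(q),\,x-q_C\rangle \ge 0$, which follows from first-order optimality of $q_C$ along segments in the convex set $C$ (legitimate since $q_C\in C\subset\Omega$ with $\Omega$ open and $F$ differentiable there, and existence/uniqueness of $q_C$ is granted in the surrounding text), and the reflection $2q_C-x\in C$ correctly upgrades the inequality to an equality in the affine case. No gaps.
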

%The inequality can be interpreted as a \emph{reverse triangle inequality}, and the equality as a \emph{Bregman Pythagorean theorem}. Both appear particularly useful in our setting, since the Kd-tree construction partitions the domain into convex bodies. 

However, this theorem only applies when $D_F$ is computed to $q$ from $q_C$. By restricting the setup to axis-aligned boxes, we obtain a similar result working in both directions. We provide results for divergences computed \textit{from} a query
\ifappendix
with proofs found in Appendix~\ref{sec:proofs}
\fi, but results and proofs for the reverse direction are analogous.

\section{Kd-trees}
We briefly overview a version of the Kd-tree data structure introduced by Bentley~\cite{Bentley1975MultidimensionalBS}, focusing on nearest neighbour queries. It is a binary tree which encodes recursive partitioning of $\mathbb{R}^d$ (in practice: a sufficiently large box contained in it) into axis-aligned boxes. We consider a variant in which each node corresponds to an axis-aligned box and the data points are stored only in the leaves. We highlight the changes required to extend the standard method to the Bregman setting.

%\begin{figure}
%    \centering
%    \includegraphics[width=0.4\textwidth]{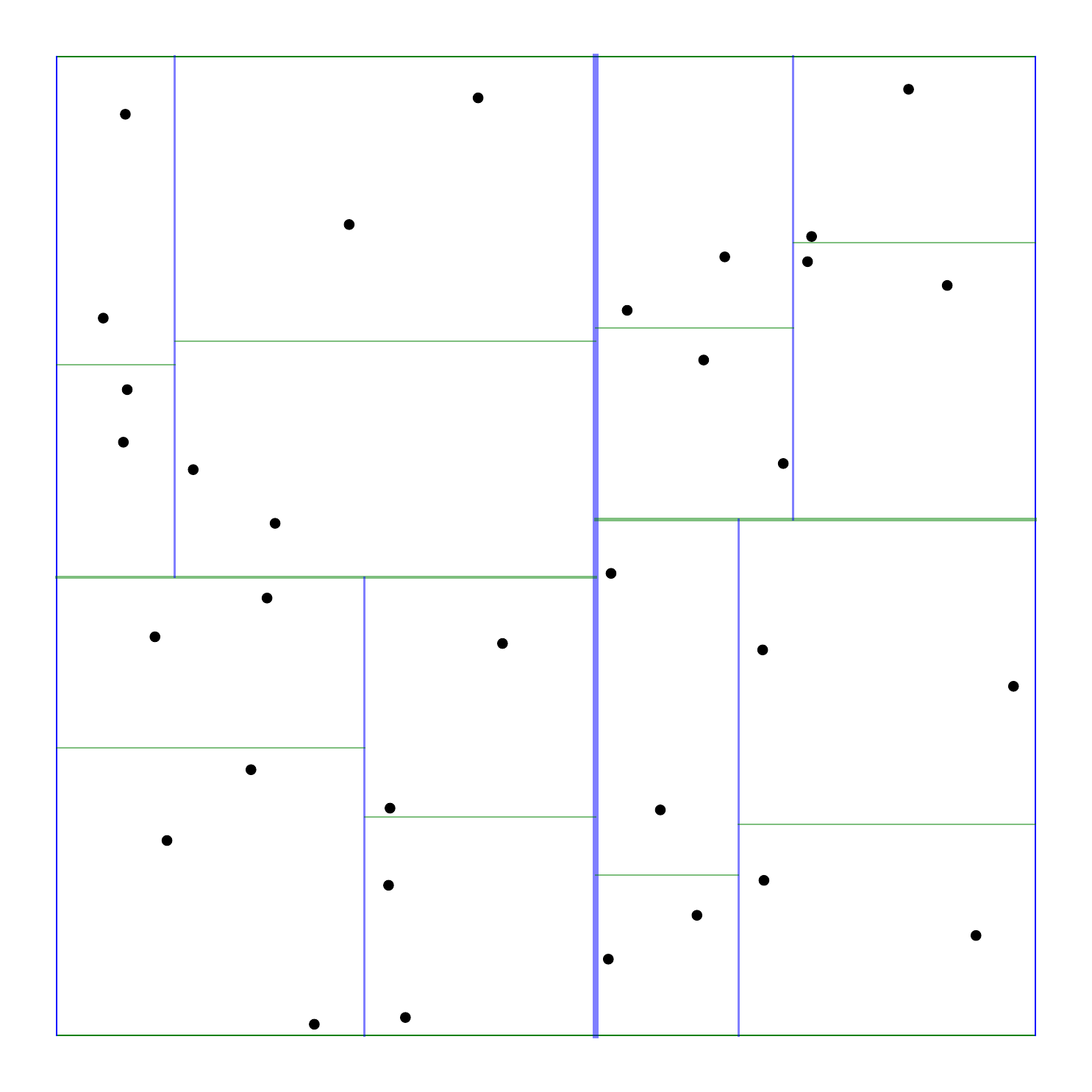}
%    \includegraphics[width=0.4\textwidth]{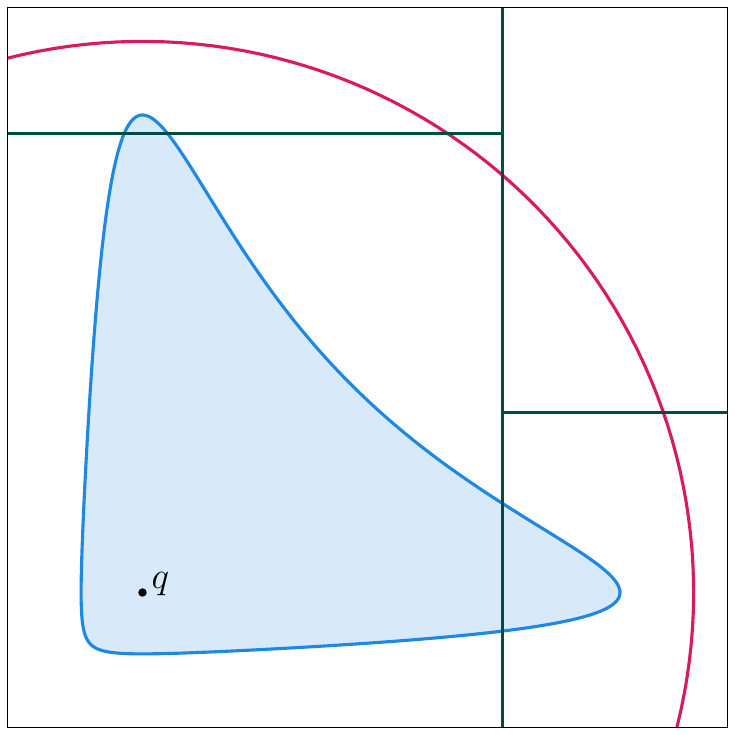}
%    \caption{\figb{Left:} Subdivision of 2-dimensional space performed the Kd-tree construction.
%    \figb{Right:} Pruning test as an intersection of a Bregman ball with a box. A primal Itakura--Saito ball and a Euclidean ball intersect different boxes, leading to different pruning decisions.}    
%    \label{fig:kd}
%\end{figure}

\myparagraph{Construction.}
Kd-trees partition the space by cutting it with axis-aligned hyperplanes, often called splitting planes or cutting planes. %; see Figure~\ref{fig:kd}.
The details of the construction, i.e. the order and locations of the splits, can have a significant impact on efficiency of Kd-trees~\cite{Bentley1975MultidimensionalBS, kd3_sproull, kd3_bentley, kd2_bentley}. However, the construction does not depend on the choice of a metric, or divergence. We therefore only mention that the standard splitting methods work in the Bregman case. It is worth emphasizing that once the tree is constructed, each query can be made using any decomposable divergence.

\myparagraph{Nearest neighbour queries in the Bregman case.}
Consider a Kd-tree constructed from a finite set of \new{data points}, $X \subset \Omega \subset \mathbb{R}^d$. To simplify exposition, we focus on finding the single, exact nearest neighbour of the query point $q$ among the points in $X$. More precisely, we consider the primal Bregman nearest neighbour, $\operatornamewithlimits{argmin}_{x \in X} D_F(q\|x)$. The proposed implementation is general.

We recall that the query can be performed using a simple recursive procedure. It traverses the tree trying to prune as many subtrees as possible, while guaranteeing that all viable candidates are considered. We overview the algorithm in the Bregman case now, and present an implementation in Section~\ref{sec:implementation}.

\begin{itemize}
\item The base case: divergences from $q$ to the points stored in the leaf are compared with the divergence to the current best candidate, $r_{nn}$, which is updated if needed. 

\item We first visit child nodes based on the relative location of the query point and the splitting plane at this node. As such, this step does not depend on the choice of divergence.
%We first visit the child which is more likely to contain points closer to $q$. The node is chosen based on the relative location of the query point and the splitting plane at this node. As such, this step does not depend on the choice of distance or divergence.

\item Moving back to the root, we visit the remaining subtrees only if they cannot be safely pruned. This is decided by what we call the \new{pruning test}. Conceptually, we rephrase it as an \new{intersection test} between the axis-aligned box corresponding to the remaining node and a primal Bregman ball. Specifically, we mean the ball centered at $q$ of radius $r_{nn}$. %See Figure~\ref{fig:kd} for an illustration. 
Clearly, if the box and the ball are disjoint, all data points in the box are in the complement of the ball, and are therefore too far away to contribute. 
\end{itemize}

\myparagraph{Details.} We mention that finding the dual Bregman nearest neighbour is completely analogous. Finding the $k$ nearest neighbours, is another easy modification involving a priority queue. In this case $r$ is set to the divergence to the current $k$-th nearest neighbour, or infinity. To allow approximate queries, the radius of the ball is decreased to $\frac{r}{1+\varepsilon}$. Our implementation supports all these options. We skip the details for brevity.

\myparagraph{Pruning test in practice.} In practice, to determine if a given box can be safely pruned, we will perform a Bregman projection of the query point onto the boundary of the box. Before we describe the implementation, we must prove that --- despite the lack of symmetry and triangle inequality --- correct and efficient pruning is possible. To this end, we focus on problems related to intersecting a Bregman ball $B \subset \Omega \subset \mathbb{R}^d$ with an axis-aligned box $A \subset \mathbb{R}^d$.% In several steps, we will simplify the setup and focus on intersections with axis-aligned hyperplanes, and then on performing Bregman projections. 
% Correctness
%     projecting onto hyperplane also checks half space intersection iff balls are connected
% Computations
%     box intersection <- box projection <- many hyperplanes projection <-(`decomposability')- single hyperplane projection

We divide our argument into two parts. Part I is presented in Section~\ref{sec:correctnessofpruning} and is more topological: we show that intersecting $B$ with the boundary of $A$ is an equivalent test. This argument works for arbitrary Bregman divergences, not only decomposable ones. Part II is presented in Section~\ref{sec:eff} and is more geometric: we replace the intersection test with a projection and show it can be computed in a simple efficient way. This part is specific to decomposable divergences.

\section{Proof of pruning correctness}\label{sec:correctnessofpruning}
We consider a Bregman ball $B \subset \Omega \subset \mathbb{R}^d$ and an axis-aligned box $A \subset \mathbb{R}^d$. The intersection test checks if the intersection $A$ and $B$ is nonempty. We first prove a crucial result.
%Our first step is to simplify this test.

%\subsection{Boundary intersection test}
%We can assume that the center of the Bregman ball $B$ lies outside of $A$. We will show that it is sufficient to intersect $B$ with the boundary of $A$. To do so we need to ensure that Bregman balls are connected -- even if the domain $\Omega$ is restricted in an arbitrary way. This is important, since we want to handle arbitrary Bregman divergences over arbitrary (valid) domains, and not just $\mathbb{R}^d$ or its selected orthants.
\begin{maybeappendix}{legendreTransform}
    \myparagraph{Legendre transform.}
    The Legendre transform is a tool from convex geometry~\cite{Rockafellar+1970}. In the context of Bregman divergences, it is used to map a Bregman generator over a domain into another generator over a possibly different domain~\cite{Bregman_Voronoi}. In particular, it transforms primal balls in one domain into dual balls in the other domain, and vice versa. We will see one basic application of this tool.
    
    More technically, given a function of Legendre type, $F:\Omega\to \mathbb{R}$, there exists the \new{Legendre transform} which maps $F$ to a conjugate $F^*:\Omega^*\to \mathbb{R}$, where $\Omega^* = \{\nabla F(x)\,:\,x\in \Omega\}$ is the conjugate domain. Under this transformation, $F^*$ is also a function of Legendre type~\cite{Rockafellar+1970} and we can define the Bregman divergence associated to $F^*$. We now use this result to prove the connectedness of Bregman balls.
\end{maybeappendix}

\begin{lem}[Connectedness\ifappendix, Proof in Appendix~\ref{pf:connectedness}\fi]\label{lem:connectedness}
    Primal and dual Bregman balls are connected.
\end{lem}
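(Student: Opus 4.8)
The plan is to prove connectedness for primal balls first, then derive the dual case via the Legendre transform. For the primal ball $B_F(q;r) = \{y \in \Omega : D_F(q\|y) \le r\}$, the key idea is to relate it to a \emph{sublevel set} of a convex function. Fix the center $q$. Consider the function $g(y) = D_F(q\|y) = F(q) - F(y) - \langle \nabla F(y), q - y\rangle$. This is generally \emph{not} convex in $y$, so we cannot argue directly. Instead, I would pass to the conjugate domain: writing $y^* = \nabla F(y)$ and using the Legendre transform $F^*$, one has the standard identity $D_F(q\|y) = D_{F^*}(y^* \| q^*)$ where $q^* = \nabla F(q)$. Now $D_{F^*}(\cdot \| q^*)$ \emph{is} a convex function of its first argument (it is $F^*(\cdot)$ minus an affine function, and $F^*$ is convex), so the set $\{y^* \in \Omega^* : D_{F^*}(y^*\|q^*) \le r\}$ is a sublevel set of a convex function, hence convex, hence connected. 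Since $\nabla F : \Omega \to \Omega^*$ is a homeomorphism (this is where the Legendre-type hypotheses, including requirement III, are used — they guarantee $\nabla F$ is a bijection onto $\Omega^*$ with continuous inverse $\nabla F^*$), the primal ball $B_F(q;r)$ is the continuous image of a connected set and is therefore connected.

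For the dual ball $B_F'(q;r) = \{y \in \Omega : D_F(y\|q) \le r\}$, I would run the symmetric argument: $D_F(y\|q) = D_{F^*}(q^*\|y^*)$, so the dual ball in $\Omega$ corresponds under $\nabla F$ to a \emph{primal} ball of $F^*$ centered at $q^*$ in $\Omega^*$. By what was just shown (applied to the generator $F^*$, which is again of Legendre type), that primal ball is connected, and pulling back by the homeomorphism $\nabla F^*$ gives connectedness of $B_F'(q;r)$. Alternatively, one can observe directly that $\{y : D_F(y\|q) \le r\}$ is itself a sublevel set of $y \mapsto F(y) - \langle \nabla F(q), y\rangle$, which is convex, so this direction is actually the easy one and is literally a convex set; the primal ball is the one genuinely requiring the transform.

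The main obstacle I anticipate is justifying the transform machinery cleanly — specifically, that $\nabla F$ is a homeomorphism from $\Omega$ onto the open set $\Omega^*$, that $F^*$ is again of Legendre type, and that the identity $D_F(x\|y) = D_{F^*}(\nabla F(y)\|\nabla F(x))$ holds. These are classical facts from Rockafellar and from the Bregman--Voronoi paper, so I would cite them rather than reprove them; the content of the lemma is really just assembling them. A minor point to handle carefully is the boundary behavior: requirement III ($\|\nabla F(x)\| \to \infty$ as $x \to \partial\Omega$) is what ensures $\Omega^*$ is all of the gradient image and that no part of a ball "escapes" to the boundary in a way that would disconnect it — this is presumably why the authors flagged requirement III as important for this section, and I would make sure the argument explicitly invokes it when asserting $\nabla F$ is onto $\Omega^*$ and proper.
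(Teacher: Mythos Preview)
Your proposal is correct and follows essentially the same route as the paper: observe that dual balls are convex (as sublevel sets of a convex function) and hence connected, then use the Legendre-transform homeomorphism $\nabla F:\Omega\to\Omega^*$ together with the identity $D_F(q\|y)=D_{F^*}(\nabla F(y)\|\nabla F(q))$ to realize each primal ball as the homeomorphic image of a convex dual ball in the conjugate domain. You supply more explicit detail (the duality identity and the role of requirement~III) than the paper's terse proof, but the argument is the same.
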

\begin{maybeappendix}{pf2}
    \begin{proof}
        The dual balls are trivially convex~\cite{Bregman_Voronoi}, hence connected.

        %First consider the dual Bregman ball. Let $x,y\in B'_F(q;r)$ and $\ell$ be the line segement in $\Omega$ connecting the two. Then the points $(x,F(x))$ and $(y, F(y))$ lie below the vertically shifted tangent plane at $(q, F(q))$, and as the function $F$ is convex, we have that the lift of $\ell$ lies below this secant line of $(x,F(x))$ and $(y, F(y))$ by Jensen's Inequality~\cite{Rockafellar+1970}. As the lift of $x$ and $y$ both must lie on or below the shifted tangent plane, the secant line must lie on or below the plane. Therefore the lift of $\ell$ must also lie below the plane as well. So $B'_F(q;r)$ is convex and thus connected.
        The primal balls are more interesting, so we show an explicit proof. First, recall that $F$ is strictly convex and differentiable, implying it is continuously differentiable.
        Therefore, the Legendre transform of $F$ induces a homeomorphism $h:\Omega\to \Omega^*$. In particular, $h$ maps dual balls in $\Omega^*$ to primal balls in $\Omega$. Since connectedness is a topological property, any primal ball in $\Omega$ is connected as the homeomorphic preimage of a connected dual ball in $\Omega^*$.
    \end{proof}
\end{maybeappendix}

\ifappendix
While this result can be obtained from existing results~\cite{Bregman_Voronoi, EdWa16}in Appendix~\ref{pf:connectedness}, we provide a proof that accentuates the importance of the assumptions of Legendre-type functions, particularly assumption (III). These subtle details are often omitted from Bregman geometry literature, prompting us to provide detailed proofs.
\else
This particular proof is useful for clarifying the importance of the three assumptions in the definition of the Legendre-type function. (I) and (II) gives \emph{continuous} differentiability, and consequently the crucial homeomorphism. As for (III), let us show how things can go wrong without it. Specifically, if we allowed arbitrary convex restrictions of the domain. Consider $\Omega'$ as a restriction of $\Omega$ to the preimage under $h$ of a non-convex primal ball in $\Omega^*$. Since $\Omega'$ is convex, everything appears to work. However, the restricted $h$ now maps $\Omega'$ to a non-convex conjugate domain, where the Legendre transform is not well defined. The above proof would therefore fail if we restricted the domain in this way -- and we could not rule out the existence of non-connected balls. Condition (III) prevents us from making this mistake. Rockefellar~\cite{Rockafellar+1970} mentions that this is a very common mistake in general ---  it is also present in the Bregman divergence literature.
\fi

We now use connectedness for the following lemma.
\begin{lem}[Boundary Intersection\ifappendix, Proof in Appendix~\ref{pf:pruning}\fi]\label{lem:pruning}
Let $A \subset \mathbb{R}^d$ be an axis-aligned box of positive finite volume with boundary $\partial A$; $q \in \mathbb{R}^d \setminus A$ be the center of a Bregman ball $B$ of finite radius $r$.
    If  $B \cap \partial{A} = \emptyset$, then $A \cap \Omega$ lies in $\Omega \setminus B$.
\end{lem}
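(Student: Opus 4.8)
The plan is to argue by contradiction using the connectedness of $B$ established in Lemma~\ref{lem:connectedness}. Suppose $B \cap \partial A = \emptyset$ but nonetheless $A \cap \Omega$ is not contained in $\Omega \setminus B$; that is, there is a point $p \in A \cap \Omega \cap B$. Since $q \notin A$, the center of $B$ lies outside $A$. On the other hand, $p$ lies inside $A$ (and in $B$). I would like to produce a point of $B$ lying on $\partial A$, contradicting the hypothesis $B \cap \partial A = \emptyset$.

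The key step is to connect $q$ and $p$ by a path inside $B$ and track where it crosses $\partial A$. Here is the subtlety: $q$ itself need not lie in $\Omega$, and $B \subseteq \Omega$, so $q$ may not be a point of $B$ at all. So instead I would work entirely with the point $p \in B$ and the structure of $A$. Since $B$ is connected (Lemma~\ref{lem:connectedness}) and $B \subseteq \Omega$, and since $p \in B \cap \operatorname{int}(A)$ (or at least $p \in B \cap A$), I want to show $B$ cannot be entirely contained in the open box $\operatorname{int}(A)$. This follows because $q$, the center, satisfies $D_F(q\|q) = 0 \le r$ in the formal sense, so intuitively $B$ "reaches toward" $q \notin A$; more carefully, I would use the fact that along the segment from $p$ toward $q$, the divergence $D_F(q\|\cdot)$ is a continuous function that is $0$ only at... wait — that is the wrong direction. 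Let me instead use: the primal ball $B = B_F(q;r) = \{y \in \Omega : D_F(q\|y) \le r\}$, and by Property~\ref{positivity} this set always contains points arbitrarily close to $q$ (since $D_F(q\|y) \to 0$ as $y \to q$ by continuity and nonnegativity). Hence $\bar B$ (closure in $\mathbb{R}^d$) contains $q$, so $\bar B \not\subseteq \operatorname{int}(A)$ because $q \notin A$. Combined with $B \cap A \ne \emptyset$, the connected set $B$ (hence $\bar B$) must meet $\partial A$: a connected set that intersects both $\operatorname{int}(A)$ and $\mathbb{R}^d \setminus A$ — here via its closure touching the complement at $q$ — cannot avoid the frontier $\partial A$.

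I expect the main obstacle to be the boundary-of-the-domain issue: $B$ is relatively closed in $\Omega$ but not closed in $\mathbb{R}^d$, and $q$ may sit on $\partial\Omega$ or outside $\Omega$ entirely, so one must be careful passing between $B$, its closure $\bar B$, and the connectedness statement. The clean way is: take the connected set $B$ with $B \cap \operatorname{int}(A) \ne \emptyset$; if $B \cap \partial A = \emptyset$ then $B \subseteq \operatorname{int}(A) \,\cup\, (\mathbb{R}^d \setminus A)$, a disjoint union of open sets, so by connectedness $B$ lies entirely in one of them, forcing $B \subseteq \operatorname{int}(A)$ since it meets that piece. But then $\bar B \subseteq \bar{A}$, contradicting $q \in \bar B \setminus A$ once we check $q \in \bar B$ via $D_F(q\| y)\to 0$ as $y \to q$ within $\Omega$ (using that $\Omega$ is open and $q$... hmm, if $q \notin \Omega$ this needs the assumption $\partial\Omega$ behaviour). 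If $q \notin \bar\Omega$ this is even easier since then $B \subseteq \Omega$ is bounded away from $q$ and one argues with the point $p$ and a ray; in all cases the contradiction is that a connected subset of $B$ cannot be trapped strictly inside $A$ while $B$ "sees" the exterior point $q$. Finally, $A \cap \Omega \subseteq \Omega \setminus B$ is just the contrapositive of "$B \cap A \ne \emptyset$", restricted to $\Omega$, which is immediate once we know $B \cap A = \emptyset$.
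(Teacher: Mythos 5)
Your proposal is correct and follows essentially the same route as the paper: both rest on the connectedness of $B$ (Lemma~\ref{lem:connectedness}) and the observation that a connected set disjoint from $\partial A$ must lie entirely on one side of it, with the outside forced by the center $q \notin A$ --- the paper phrases this via the Jordan--Brouwer separation theorem, while you use the elementary decomposition $\mathbb{R}^d\setminus\partial A=\operatorname{int}(A)\sqcup(\mathbb{R}^d\setminus A)$. Your worry about $q\notin\Omega$ is moot: by the paper's definition a Bregman ball is centered at a point of $\Omega$, so $D_F(q\|q)=0\le r$ gives $q\in B$ directly and no closure or limiting argument is needed.
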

\begin{maybeappendix}{pf3}
    \begin{proof}
        Being a codimension-1 topological sphere, $\partial A$ divides $\mathbb{R}^d$ into the inside and the outside, by the Jordan--Brouwer separation theorem. Because $B \cap \partial A = \emptyset$, $q \notin A$, and $B$ is connected, we have that $B$ is necessarily on the outside of $\partial A$ and so $B \cap A = \emptyset$. Therefore, $A \cap \Omega$ indeed lies in the complement of $B$ in $\Omega$.
    \end{proof}
    \end{maybeappendix}
Thus $B\cap \partial A = \emptyset$ implies that the divergence from $q$ to each potential \emph{data} point in $A$ exceeds the radius of $B$, namely $r$. This means that the intersection test with the boundary of $A$ is sufficient to safely prune points in the Kd-tree query. We omit the analogous case for dual balls. We mention that the finite volume assumption is just a technicality as in practice Kd-trees partition a box of finite volume.

%See Figure~\ref{fig:GKL_v_Euc} for an illustration. 

\section{Proof of pruning efficiency}\label{sec:eff}
In this section we show that the pruning test can be performed efficiently in the case of decomposable Bregman divergences.
Specifically, we aim to update the projection in $O(1)$ running time, independently of the dimension of the ambient space. To this aim, we rephrase the pruning test in terms of a Bregman projection onto the \emph{boundary} of the box. 
We call this the \new{boundary projection test}.

\begin{lem}\ifappendix[Proof in Appendix~\ref{pf:domain_expansion}]\fi\label{lem:domain_expansion}
    Let $D'_F$ be a decomposable Bregman divergence defined on $\Omega'\times \Omega'$. Then $D'_F$ is a restriction of a divergence defined on an axis-aligned box\ifappendix, which we write as $\Omega=\prod_{i=1}^d\omega_i$\fi.
\end{lem}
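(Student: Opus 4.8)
The plan is to extend the generator to the ``obvious'' box and check it satisfies every hypothesis. Write $\omega_i\subseteq\mathbb{R}$ for the domain of the univariate generator $f_i$; since $f_i$ is of Legendre type, $\omega_i$ is a nonempty open interval. Set $\Omega=\prod_{i=1}^d\omega_i$: this is a nonempty open convex subset of $\mathbb{R}^d$, and by construction an axis-aligned box. Because $D'_F$ is decomposable, the expression $D'_F(x\|y)=\sum_{i=1}^d D_{f_i}(x_i\|y_i)$ only makes sense when every $x_i$ and $y_i$ lies in $\omega_i$; hence $\Omega'\subseteq\Omega$, and what remains is to check that $F=\sum_{i=1}^d f_i$, now regarded as a function on all of $\Omega$, is of Legendre type.

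Two of the three requirements are routine. Differentiability is immediate since the $i$-th summand depends only on the $i$-th coordinate, so $\nabla F(x)=(f_1'(x_1),\dots,f_d'(x_d))$ exists throughout $\Omega$. Strict convexity is checked coordinatewise: for distinct $x,y\in\Omega$ and $t\in(0,1)$, choose an index $j$ with $x_j\neq y_j$; convexity of each $f_i$ gives $f_i(tx_i+(1-t)y_i)\le tf_i(x_i)+(1-t)f_i(y_i)$, with strict inequality in the $j$-th term by strict convexity of $f_j$, and summing yields strict convexity of $F$.

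The one place where the Legendre-type hypotheses on the $f_i$ are genuinely used --- and the step I expect to be the main (indeed only) obstacle --- is requirement (III). If every $\omega_i=\mathbb{R}$ then $\Omega=\mathbb{R}^d$ has empty boundary and nothing is required; otherwise, take any sequence $x^{(n)}\in\Omega$ with $x^{(n)}\to x^{*}\in\partial\Omega$. Since $\partial\Omega=\prod_i\overline{\omega_i}\setminus\prod_i\omega_i$, there is an index $j$ with $x^{*}_j\in\partial\omega_j$ (a finite endpoint of $\omega_j$), while $x^{(n)}_j\in\omega_j$ and $x^{(n)}_j\to x^{*}_j$; requirement (III) for $f_j$ then gives $|f_j'(x^{(n)}_j)|\to\infty$, so $\|\nabla F(x^{(n)})\|\ge|f_j'(x^{(n)}_j)|\to\infty$. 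Thus $F$ is of Legendre type on the box $\Omega$ and generates a Bregman divergence $D_F$ on $\Omega\times\Omega$; by decomposability $D_F(x\|y)=\sum_{i=1}^d D_{f_i}(x_i\|y_i)$, which is exactly the defining formula for $D'_F$, so restricting $D_F$ to $\Omega'\times\Omega'\subseteq\Omega\times\Omega$ recovers $D'_F$. The only subtlety to watch is that the blow-up of a single coordinate derivative already forces $\|\nabla F\|\to\infty$, and that the unbounded case must be handled separately.
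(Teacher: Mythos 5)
Your proof is correct and follows essentially the same route as the paper's: take $\Omega=\prod_{i=1}^d\omega_i$, the product of the domains of the univariate generators, note $\Omega'\subseteq\Omega$, and observe that the divergence generated by $F=\sum_{i=1}^d f_i$ on this axis-aligned box restricts to $D'_F$ on $\Omega'\times\Omega'$. The only difference is that you explicitly verify that $F$ is of Legendre type on the box (coordinatewise differentiability, strict convexity, and the boundary gradient blow-up condition (III)), a check the paper's proof leaves implicit.
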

\begin{maybeappendix}{pf4}
    \begin{proof}
        Let $D'_F$ be a decomposable Bregman divergence. Then $F= \sum_{i=1}^df_i$, where each $f_i$ is a univariate function of Legendre type. Thus, each $f_i$ has a convex domain in $\mathbb{R}$, $\omega_i$, which must be a interval. Thus, $D_F$ is a Bregman divergence defined on an axis-aligned box. Then, as both are generated by the same function of Legendre type, $D_F|_{\Omega'\times\Omega'} = D'_F$ and thus $D'_F$ is a restriction of $D_F$.
    \end{proof}
\end{maybeappendix}
This lemma,  further emphasizes the importance of Legendre-type function assumption (III), as a restriction on a Bregman divergence's domain is induced by a domain restriction on the parametrization function $F$.
 
 For a given query and set of data points the nearest neighbours are identical under both $D_F$ and a restricted $D'_F$, allowing use of either divergence. The assumption on the domain $\Omega = \prod_{i=1}^{d} \omega_i$ ensures that \cref{lem:axis-aligned-proj} and \cref{lem:box-proj} apply to Kd-trees. This is important because Kd-trees decompose $\mathbb{R}^d$ and not the chosen domain of a Bregman divergence. Additionally, in the unrestricted domain, \cref{lem:increment-projection} enables efficient query processing and ensures that our underlying algorithm remains robust under any domain restriction.
%We stress an important technicality: any decomposable divergence $D_F = \sum_{i=1}^{d}D_{f_i}$ is a restriction of a divergence defined on a box, namely $\prod_{i=1}^d \omega_i$, where each $\omega_i$ is the domain of $f_i$.
%When restricting a Bregman divergence, the induced restriction on the parametrization function $F$. This must still be a function of Legendre type, emphasizing property (III) and making restrictions nontrivial.

%As an example, let us consider the KL divergence, defined on the open standard simplex, $\triangle^{d-1} \subset \mathbb{R}^d$. It is a restriction of GKL on $\mathbb{R}_{+}^d$. To handle queries for data in $\triangle^{d-1}$ with KL, Kd-trees perform projections in $\mathbb{R}_{+}^d$ using GKL. More generally, we end up with exact Kd-tree queries: in the worst case, we may fail to prune points that would be pruned if we performed the projections inside the restricted domain (e.g. a simplex). However the divergence computations between data points in the restricted domain are not affected by these considerations. We remark that performing efficient projections inside the restricted domain would be much harder. It is in fact an interesting open problem, even for the special case of KL.
%Consider a decomposable Bregman divergence $D_{F}:\Omega\times\Omega\to \mathbb{R}$ and a restricted version $D'_F:\Omega'\times\Omega'\to\mathbb{R}$, where $\Omega'\subseteq\Omega$. Then for any $p,q\in \Omega'$ it holds that $D_F(p\|q) = D'_F(p\|q).$ Therefore, 

\myparagraph{From boxes to hyperplanes.}
We first consider a simplified problem, namely a Bregman projection onto a single axis-aligned hyperplane.% We return to the original problem afterwards. For the squared Euclidean distance, projecting onto an axis-aligned hyperplane is straightforward. We now consider a version for other decomposable Bregman divergences.

% %\begin{figure}
%     \centering
%     \includegraphics[width=0.45\textwidth]{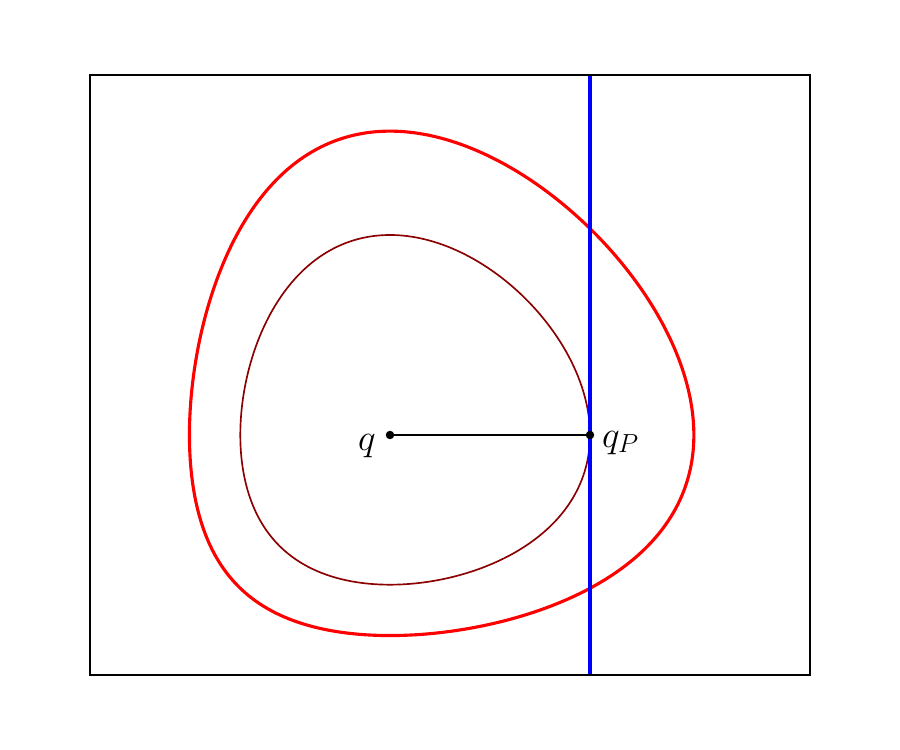}
%     \includegraphics[width=0.45\textwidth]{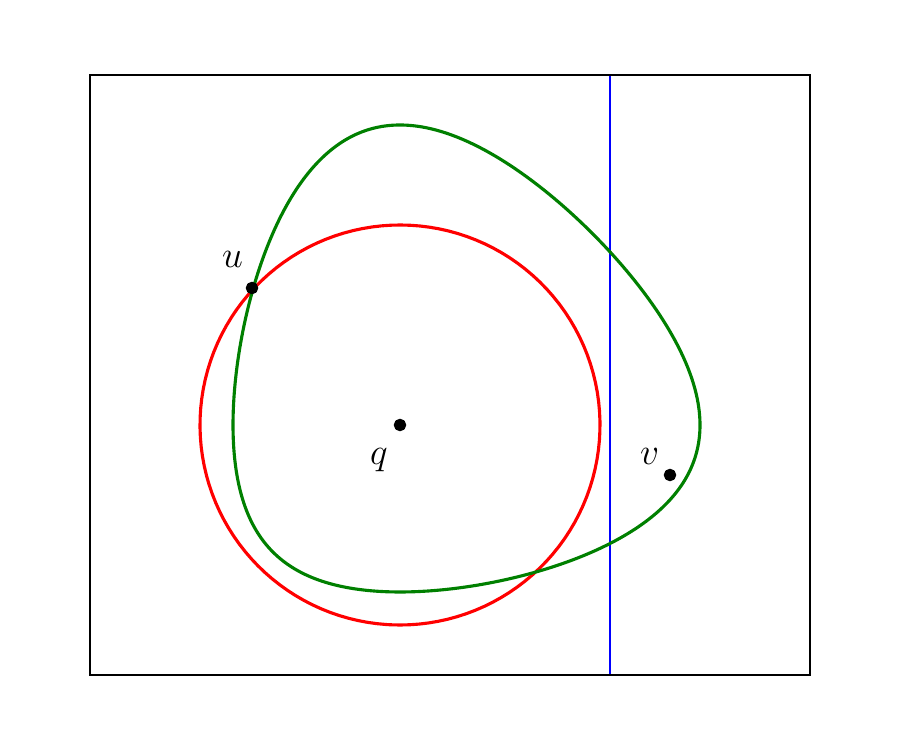}
%     \caption{\figb{Left:} A GKL ball intersects the hyperplane and contains the ball of radius $\inf_{p\in P}D_{GKL}(q\|p)$. \figb{Right:} An IS ball intersects the hyperplane while the Euclidean ball does not, resulting in different nearest neighbours.}
%    \label{fig:GKL_v_Euc}
%\end{figure}

\begin{lem}[Axis-Aligned Projection\ifappendix, Proof in Appendix~\ref{pf:axis-aligned-proj}\fi]\label{lem:axis-aligned-proj}
Let $F=\sum_{i=1}^{d}f_i$ be a decomposable function of Legendre type defined on an axis-aligned box.%, where each $f_i$ has domain $\omega_i\subseteq\mathbb{R}$.

Let $P\subset\mathbb{R}^{d}$ be an axis-aligned hyperplane such that $P\cap\Omega\neq\emptyset$. Let $q_P$ be the Bregman projection of a point $q \in \Omega$ onto $P$  with respect to $D_F$. We claim that $q_P$ coincides with the orthogonal projection of $q$ onto $P$ for computation in either direction.
% and let $F$ be a separable function of Legendre type. Then the decomposable Bregman divergence generated by $F$ has a projection in $q$ 
\end{lem}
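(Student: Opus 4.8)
The plan is to exploit the decomposable structure of $D_F$, which turns the projection into $d$ independent univariate minimizations, most of which are solved by Bregman Nonnegativity. First I would record the shape of the objects involved: an axis-aligned hyperplane is $P=\{x\in\mathbb{R}^d : x_j=c\}$ for some index $j\in\{1,\dots,d\}$ and constant $c\in\mathbb{R}$, and since $F$ is decomposable on an axis-aligned box we may write $\Omega=\prod_{i=1}^d\omega_i$ with each $\omega_i$ an interval. The hypothesis $P\cap\Omega\neq\emptyset$ then forces $c\in\omega_j$, and $q=(q_1,\dots,q_d)\in\Omega$ gives $q_i\in\omega_i$ for every $i$.

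Working first in the direction \emph{from} a point $x\in P\cap\Omega$ \emph{to} $q$, decomposability yields $D_F(x\|q)=\sum_{i=1}^d D_{f_i}(x_i\|q_i)$. On $P$ the $j$-th summand $D_{f_j}(c\|q_j)$ is constant, so minimizing $D_F(\cdot\|q)$ over $P\cap\Omega$ reduces to minimizing $\sum_{i\neq j}D_{f_i}(x_i\|q_i)$ over the free coordinates $x_i\in\omega_i$. By Bregman Nonnegativity (Property~\ref{positivity}) applied to each $f_i$, every term $D_{f_i}(x_i\|q_i)$ is nonnegative and vanishes exactly when $x_i=q_i$; because $q_i\in\omega_i$ this value is attained. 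Hence the unique minimizer has $j$-th coordinate $c$ and all other coordinates equal to $q_i$ --- precisely the orthogonal projection of $q$ onto $P$. I would also note in passing that this point lies in $\Omega$ (its coordinates lie in $\omega_j$ and in the respective $\omega_i$), so it is a genuine element of $P\cap\Omega$ and the infimum is a true minimum; consequently there is no need to invoke existence/uniqueness of Bregman projections onto closed convex sets, which would be slightly awkward here since $P\cap\Omega$ need not be closed.

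Finally, the reverse direction is handled identically: writing $D_F(q\|x)=\sum_{i=1}^d D_{f_i}(q_i\|x_i)$, the term $D_{f_j}(q_j\|c)$ is again constant on $P$, and each $D_{f_i}(q_i\|x_i)$ for $i\neq j$ is nonnegative and minimized (at $0$) exactly at $x_i=q_i$, producing the same orthogonal projection. The argument is essentially routine; the only step that needs mild care is the domain bookkeeping --- using $P\cap\Omega\neq\emptyset$ together with $q\in\Omega$ to confirm that the orthogonal projection actually lies in $P\cap\Omega$, so that the infimum is achieved there rather than merely approached. Everything else follows immediately from decomposability and Property~\ref{positivity}.
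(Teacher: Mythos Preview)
Your proof is correct and follows essentially the same approach as the paper: exploit decomposability to split $D_F$ into a constant $j$-th term plus independent univariate terms, then apply Bregman Nonnegativity (Property~\ref{positivity}) to each of the latter to force $x_i=q_i$ for $i\neq j$. Your treatment is in fact slightly more careful than the paper's, since you explicitly verify the domain bookkeeping (that $c\in\omega_j$ and the orthogonal projection lies in $P\cap\Omega$) and write out both directions rather than only one.
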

\begin{maybeappendix}{pf5}
\begin{proof}
Let $P$ be an axis-aligned hyperplane orthogonal to the $j$-th standard basis vector. Specifically, each point $p \in P$ has its $j$-th coordinate fixed.

%construct candidate
Because: (1) $q$ is fixed; (2) $p_j$ is fixed for each $p\in P$; and (3) $D_F$ is decomposable, we can write $D_{F}(q\|p) = \sum_{i=1}^{d}D_{f_i}(q_i \| p_i) = D_{f_j}(q_j\|p_j) + \sum_{i\neq j}D_{f_i}(p_i\|q_i)$. To minimize $D_F(q\|p)$, we minimize $\sum_{i\neq j}D_{f_i}(q_i\|p_i)$, and since each $D_{f_i}$ is a Bregman divergence, Bregman Nonnegativity (Property~\ref{positivity}) applies. So each $D_{f_i}(q_i\|p_i)\geq0$, with equality if and only if $p_i=q_i$.
Consequently, $\arg\inf_{p\in P \cap \Omega}D_{F}(q\|p) = (q_1,q_2,\dots,p_j,\dots,q_d)$. Therefore, the Bregman projection of $q$ onto $P$ is precisely the orthogonal projection.
\end{proof}
\end{maybeappendix}
Generally our Bregman projection \emph{from} $q$ would not be considered a Bregman projection, and Lemma \ref{lem:proj}) would not apply to it. In our case, the two projections coincide, so we will refer to the resulting point as the Bregman projection onto the axis-aligned hyperplane. With this we get the following corollary.

\begin{cor}[Box Projection Divergence]\label{lem:box-proj}
    %Under the same assumptions as in \cref{lem:axis-aligned-proj}, 
    Let $F=\sum_{i=1}^{d}f_i$ be a decomposable function of Legendre type defined on an axis-aligned box. The Bregman projection divergence of $q\in\Omega$ onto $A$, with respect to the Bregman divergence generated by $F$, can be computed as
\begin{align*}    
    \sum_{i=1}^d D_{f_i}(q_i \| p_i) = D_F(q \| p),
\end{align*}
where $p$ is the (squared) Euclidean projection of $q$ onto $A$.
\end{cor}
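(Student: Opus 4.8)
The plan is to reduce the statement to a coordinate-wise application of the Axis-Aligned Projection lemma (Lemma~\ref{lem:axis-aligned-proj}). Since $A = \prod_{i=1}^d [a_i, b_i]$ is an axis-aligned box and $F = \sum_{i=1}^d f_i$ is decomposable, the divergence $D_F(q \| p) = \sum_{i=1}^d D_{f_i}(q_i \| p_i)$ separates completely across coordinates. Hence minimizing $D_F(q \| p)$ over $p \in A \cap \Omega$ amounts to independently minimizing each $D_{f_i}(q_i \| p_i)$ over $p_i \in [a_i, b_i] \cap \omega_i$, where $\omega_i$ is the interval domain of $f_i$ guaranteed by Lemma~\ref{lem:domain_expansion}. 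So the real content is a one-dimensional claim: the minimizer of $D_{f_i}(q_i \| \cdot)$ over an interval is the Euclidean (clamping) projection of $q_i$ onto that interval.

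First I would handle the one-dimensional case. Fix a coordinate $i$ and consider $g(t) = D_{f_i}(q_i \| t)$ for $t$ in the interval. If $q_i$ itself lies in the interval, Bregman Nonnegativity (Property~\ref{positivity}) gives $g(t) \ge 0$ with equality exactly at $t = q_i$, so the minimizer is $q_i$ — which is also its own Euclidean projection onto the interval. If $q_i$ lies outside the interval, say to the left of $a_i$, I want to argue the minimizer is the endpoint $a_i$, matching the clamping projection. For this I would differentiate: $\frac{d}{dt} D_{f_i}(q_i \| t) = \frac{d}{dt}\big(f_i(q_i) - f_i(t) - f_i'(t)(q_i - t)\big) = -f_i''(t)(q_i - t) = f_i''(t)(t - q_i)$. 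Since $f_i$ is strictly convex, $f_i'' > 0$, so the sign of the derivative equals the sign of $t - q_i$. When $q_i < a_i$, we have $t - q_i > 0$ throughout the interval, so $g$ is strictly increasing and attains its infimum at the left endpoint $a_i$; symmetrically for $q_i > b_i$. In all cases the minimizer is the clamp of $q_i$ to $[a_i, b_i]$, i.e. the Euclidean projection.

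Next I would assemble the coordinates. Because the box is a product and the divergence is a sum, the joint minimizer $p$ has $p_i$ equal to the clamp of $q_i$ into the $i$-th slab for every $i$; this vector $p$ is exactly the (squared) Euclidean projection of $q$ onto $A$, since the Euclidean projection onto an axis-aligned box also acts coordinate-wise by clamping. One small point to check is that the coordinate-wise clamp of $q$ genuinely lies in $\Omega = \prod_i \omega_i$: each $p_i$ lies in $[a_i,b_i] \cap \omega_i$ because the argmin is taken over $P \cap \Omega$ in the sense of Lemma~\ref{lem:axis-aligned-proj}, and Kd-tree boxes relevant to pruning intersect $\Omega$. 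Substituting this $p$ back gives the projection divergence as $\sum_{i=1}^d D_{f_i}(q_i \| p_i) = D_F(q \| p)$, which is the claimed identity; Lemma~\ref{lem:axis-aligned-proj} is what licenses calling this point the Bregman projection even though we are computing the divergence \emph{from} $q$ rather than \emph{to} it.

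The main obstacle is not the convexity computation — that is routine — but the domain bookkeeping: one must be careful that the box $A$ may poke outside $\Omega$, so the actual feasible set is $A \cap \Omega$, and the clamped point must be shown to land inside $\Omega$ and not merely inside the closure. Lemma~\ref{lem:domain_expansion} is exactly the tool that lets us treat $\Omega$ as an axis-aligned box $\prod_i \omega_i$ so that ``$A \cap \Omega$'' is itself a box and the coordinate separation goes through cleanly; assumption (III) on Legendre-type functions underpins this. I would therefore state the corollary's hypotheses so that the relevant face of $A$ meets $\Omega$, and otherwise the argument is a direct combination of Lemma~\ref{lem:axis-aligned-proj} applied slab-by-slab with the one-dimensional monotonicity observation above.
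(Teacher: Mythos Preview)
Your proof is correct and follows the same route the paper (implicitly) takes: decomposability reduces the box projection to $d$ independent one-dimensional problems, and the minimizer in each is the clamp of $q_i$ to $[a_i,b_i]$, i.e.\ the Euclidean projection; the paper in fact gives no separate proof of this corollary beyond labeling it a consequence of Lemma~\ref{lem:axis-aligned-proj}, so your explicit monotonicity step is exactly what fills the gap. The only minor imprecision is invoking $f_i''>0$: Legendre-type generators are assumed merely differentiable and strictly convex, not twice differentiable, so replace that line with the observation that strict convexity makes $f_i'$ strictly increasing, whence $t\mapsto D_{f_i}(q_i\|t)$ is strictly monotone on each side of $q_i$ via the three-point identity $D_{f_i}(q_i\|t)-D_{f_i}(q_i\|s)=D_{f_i}(s\|t)+(f_i'(s)-f_i'(t))(q_i-s)$.
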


\myparagraph{Back to the pruning test.} To decide if the input points in the current box can be pruned, we compare two values. One is the divergence to the current best candidate; the second one is the projection divergence of $q$ onto an axis-aligned box $A$. This also works for divergence computed in the reverse direction.

In the end, the situation is very simple. This simplicity allows us to compute the projection divergence in time $O(d)$ -- exactly as in the (squared) Euclidean case. 

\myparagraph{Efficient projection.} We focus on \emph{maintaining} the projection divergence during the course of the query, rather than computing it every time. It turns out a single update can be done in constant time, independent of the dimension.

\begin{lem}[Updating Projection Divergence in Constant Time\ifappendix, Proof in Appendix~\ref{pf:increment-projection}\fi]\label{lem:increment-projection}
Let $F=\sum_{i=1}^df_i$ be a decomposable function of Legendre type, where each $f_i$ has domain $\omega_i\subseteq \mathbb{R}$. Then the projection divergence can be updated in constant time.
\end{lem}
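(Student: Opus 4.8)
The plan is to exploit the traversal structure of a Kd-tree query: as we descend from the root to a leaf (or back up), we move from one node's box to a child's box, and the two boxes differ only by a single axis-aligned cut. By \Cref{lem:domain_expansion} we may assume the divergence is defined on an axis-aligned box $\Omega=\prod_{i=1}^d\omega_i$, and by \Cref{lem:box-proj} the projection divergence of $q$ onto a box $A=\prod_{i=1}^d[a_i,b_i]$ is $D_F(q\|p)=\sum_{i=1}^d D_{f_i}(q_i\|p_i)$, where $p_i$ is the clamp of $q_i$ to $[a_i,b_i]$ — that is, $p_i=a_i$ if $q_i<a_i$, $p_i=b_i$ if $q_i>b_i$, and $p_i=q_i$ otherwise. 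The key observation is that this sum decomposes coordinatewise, and each coordinate's contribution depends only on the interval $[a_i,b_i]$ in that coordinate.

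First I would set up the data maintained during the traversal: the current box $A$ and the scalar $S=\sum_{i=1}^d D_{f_i}(q_i\|p_i)$ equal to the projection divergence of $q$ onto $A$. Second, I would describe the update when we pass from a node with box $A$ to a child with box $A'$: the child's box agrees with $A$ in every coordinate except one, say coordinate $j$, where the interval $[a_j,b_j]$ is replaced by a sub-interval $[a_j',b_j']$ (one endpoint moved inward to the splitting value, the other unchanged). Hence only the $j$-th summand of $S$ can change. The update is then $S' = S - D_{f_j}(q_j\|p_j) + D_{f_j}(q_j\|p_j')$, where $p_j$ is the old clamp of $q_j$ to $[a_j,b_j]$ and $p_j'$ is the new clamp to $[a_j',b_j']$; both clamps and both one-dimensional divergence evaluations $D_{f_j}$ are $O(1)$ operations (assuming $f_j$ and $\nabla f_j$ evaluate in constant time, which is the standard assumption, already implicit in the $O(d)$ cost of \Cref{lem:box-proj}). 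Third, I would note the symmetric statement for the reverse direction $D_F(p\|q)$ follows identically, since that divergence is also decomposable and the clamp point is the same. Finally, I would observe the base value at the root is computed once in $O(d)$, so each subsequent node is reached with its projection divergence already in hand.

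The only subtlety — and the step I expect to require the most care — is verifying that each edge of the Kd-tree genuinely changes only one coordinate's interval, and that the changed coordinate's new clamp $p_j'$ is still the clamp of $q_j$ to a sub-interval of the old one (so that the new summand is well-defined, i.e. $[a_j',b_j']\subseteq\omega_j$, which follows because $[a_j,b_j]$ already met $\Omega$ nontrivially along axis $j$ and cuts only shrink it). This is immediate from the definition of a Kd-tree split — a single axis-aligned hyperplane partitions the parent box — but it should be stated explicitly. A secondary point worth mentioning is that when we visit the second child on the way back up, we must be able to restore or recompute $S$ for the sibling box in $O(1)$ as well; this is handled by undoing the single-coordinate update (keeping the old summand $D_{f_j}(q_j\|p_j)$ on the recursion stack), again a constant-time operation. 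With these observations the lemma follows, and the overall query cost matches the Euclidean case up to the per-node constant.
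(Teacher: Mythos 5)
Your proposal is correct and follows essentially the same route as the paper's own proof: both reduce the update to the observation that a Kd-tree child box differs from its parent in a single coordinate interval, so by decomposability and the clamp characterization of the projection (Corollary~\ref{lem:box-proj}) the projection divergence changes only in that one summand, giving the update $S' = S - D_{f_j}(q_j\|p_j) + D_{f_j}(q_j\|p_j')$ in $O(1)$. Your extra remarks on restoring the value for the sibling node via the recursion stack and on the $O(d)$ initialization at the root are consistent with the paper's implementation (Algorithm~\ref{alg:ann}) and merely make explicit details the paper leaves implicit.
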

\begin{maybeappendix}{pf6}
    \begin{proof}
        %We adapt an algorithmic trick used in ANN's Kd-tree query that works for the squared Euclidean case. It is based on two observations: (1) that the computation of the squared Euclidean distance decomposes by dimension; (2) that moving from one box to its immediate child only shifts one wall. This is exploited in the query algorithm: when moving from a splitting node to its unvisited child, the squared Euclidean distance needs to be updated along only a single dimension. This is illustrated in Figure~\ref{fig:div_proj_update_scheme}. Crucially, this allows for an O(1) update of the projection distance onto the new box. 
        
        Let $q\in \Omega$ and $B=\prod_{i=1}^d[a_i, b_i]$ be a box corresponding to a splitting node of our Kd-tree. By corollary 1, the Bregman projection of $q$ onto $B$ is on the boundary of $B$. Denote this Bregman projection $x = \arg\inf_{p\in B}D_F(q\| p)$. As $x$ lies on the boundary, $x_i$ is either $a_i, q_i,$ or $b_i$.
        
        For the box $C$ corresponding to a child node, we change only one wall of $B$ by the construction of the Kd-tree. Without loss of generality, $C = [c, b_1]\times\prod_{i=2}^d[a_i, b_i]$, with $a_1 < c < b_1$. For $y=\arg\inf_{p\in C}D_F(q\|p)$ we similarly have $y_i = x_i$ for $i=2,\dots,d$.
        
        Since $D_F$ is decomposable, $D_{f_i}(q_i\|x_i) = D_{f_i}(q_i\|y_i)$ for $i=2,\dots,d$. Thus $D_{F}(q\|\rho) = D_{F}(q\|\omega) - D_{f_1}(q_1\|\omega_1) + D_{f_1}(q_1\|\rho_1).$ This is illustrated in Figure~\ref{fig:div_proj_update_scheme}.
        
        Thus, as we move from a splitting node to its child, updating the projection divergence is independent of the dimension, $d$. 
    \end{proof}
\end{maybeappendix}
Moving from a Kd-tree node to its child, the corresponding box shrinks along a single dimension. The projection divergence can therefore be updated using at most two divergence computations (one negative, one positive) along the same dimension. The update is $O(1)$ and independent of the embedding dimension. See Figure~\ref{fig:div_proj_update_scheme}.

\begin{figure}
    \centering
    \includegraphics[width = .4\textwidth]{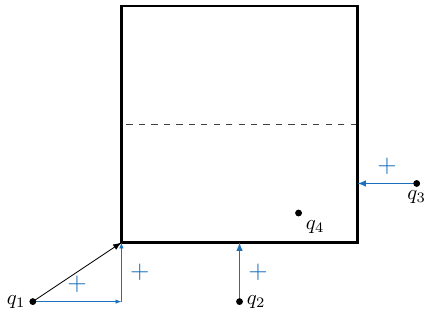}
    \includegraphics[width = .4\textwidth]{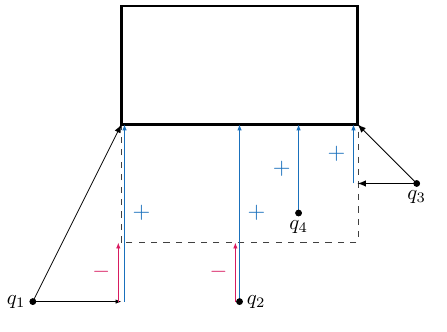}
    \caption{\figb{Left}: Calculation of projection divergences of each point $q_i$ onto the box decomposed as the sum of divergence computations along individual dimension. \figb{Right:} Efficient update of projection divergence.}
    \label{fig:div_proj_update_scheme}
\end{figure} 
As \cref{lem:box-proj} follows from \cref{lem:axis-aligned-proj} and \cref{lem:increment-projection} solely depends on the decomposable structure, the results apply for Bregman divergences computed in either direction.

\section{Implementation} \label{sec:implementation}
Our implementation is based on the ANN (Approximate Nearest Neighbour) C++ library by Mount and Arya~\cite{ANN_Manual}. It is an optimized library for Kd-trees. Extending the query algorithms to the Bregman setting requires significant modifications --- we therefore stress it is not a matter of simply using the library.

\myparagraph{Bregman query implementation.}
%In this case, we maintain the projection \emph{divergence}. From~\cref{lem:box-proj} (\nameref{lem:box-proj}) we know that the projection point onto an axis-aligned box does not depend on the choice of the decomposable Bregman divergence. We therefore reuse the algorithm, only changing the squared Euclidean distance computation, along a single dimension, to a decomposable Bregman divergence computation, along the same single dimension.
Algorithm~\ref{alg:ann} shows a C++ implementation of the Bregman query algorithm using this optimization (modulo unimportant technicalities). The code is structured after the implementation in the ANN library. We show only the part of the code for splitting nodes; handling leaf nodes is straightforward. 

We assume that splitting nodes are instances of class \texttt{kd\_tree\_splitting\_node}. Leaf nodes store input points and queries are handled with a linear search algorithm. Variable \texttt{eps} is used for approximate queries. Finally, \texttt{D\_f} is assumed to compute the decomposable Bregman divergence \emph{along a selected coordinate} -- for all practical decomposable divergences this takes time $O(1)$ by utilizing \cref{lem:increment-projection} at lines 18 and 20. Line 18 adds the new projection divergence, \texttt{new\_proj\_div}, and the old projection divergence is removed in line 20. We remark that many implementations, including KDTree from the popular sklearn library~\cite{scikit-learn}, use a slow $O(d)$ approach, adding unnecessary work in higher dimensions.

The variable \texttt{knn\_priority\_queue} is used to maintain the $k$-nearest neighbours.
To perform the dual query, just swap the parameters in the function used to compute \texttt{D\_f}. The extra argument in \texttt{D\_f} is just a technicality which allows one to use different 1-dimensional divergence depending on the currently considered dimension.

%We also implemented an analogous algorithm for the version using priority search, which reorders the visited nodes based on the projection distance. The Bregman part is analogous, so we are not showing it here.

\begin{algorithm}
\caption{Bregman Kd-tree query implementation}
\label{alg:ann}
\begin{lstlisting}[language=C++, escapechar = |]
struct kd_tree_splitting_node : kd_tree_node {
  kd_tree_node *child_lower, *child_higher;  
  int cut_dim;  
  float cut_val, upper_bound, lower_bound;
  virtual void search(...);
};

using div_t = std::function<float(const float, const float, const int)>;

float D_GKL(const float x_i, const float y_i, const int dim) { // example
  return x_i*log(x_i) - x_i*log(y_i) - x_i + y_i;
}

void virtual kd_tree_splitting_node::search(const point& q, 
               float box_proj_div, div_t D_f, float eps=0.0) {
  if (q[cut_dim] < cut_val) { // q lower than the cutting plane
    child_lower->search(q, box_proj_div, D_f, eps); // more promising child	
    float new_box_proj_div = box_proj_div+D_f(q[cut_dim], cut_val, cut_dim);
    if (lower_bound > q[cut_dim]) 
      new_box_proj_div -= D_f(q[cut_dim], lower_bound, cut_dim);
    if (box_div*(1+eps) < knn_priority_queue->max_divergence())
      child_higher->search(q, new_box_proj_div, D_f, eps); // recursive call
  }
  else {/* analogous for q higher than the cutting plane... */ }    
}
\end{lstlisting}
\end{algorithm}

\myparagraph{Expected computational complexity of a query.}
While we perform a single visit of an internal node in an optimal (constant) time, 
the expected complexity of the query remains an open problem. In particular,
proving the $O(\log n)$ bound for uniformly distributed data is significantly harder
in our setting. First, it is not clear what it means to uniformly distribute points
with respect to a given divergence. Second, standard proofs rely on volume arguments ---
but in our case the volume of a ball depends on its location. These issues necessitate new, significantly more sophisticated, proof techniques. We leave this as an open problem, and
show experimentally that the method performs well in practical situations.

\section{Experiments}
%like to utilize the usual computational geometry tools and ideas in the space...
The main practical motivation of our work is to play the usual computational geometry games in the point clouds produced by machine learning models. In particular, we wish to compare two collections of probabilistic (soft) predictions using the KL divergence. Efficient nearest neighbour queries are useful in this setting, however using the Euclidean tools leads to severe discrepancies. Finally, the dimension is often not overly high (often 10-100) which gives hope that Kd-trees can be efficient.

We will benchmark Bregman Kd-trees for exact and approximate nearest neighbour queries and compare with other methods. Additional results are reported in~\cref{sec:extra_tests}. 

We stress that efficiency is just one aspect determining the practicality of a method. Bregman Kd-trees have several unique advantages which make them practical. For example, once a Bregman Kd-tree is constructed, each query can be performed using a different decomposable Bregman divergence.

\myparagraph{Data sets.} 
We use synthetic data, standard datasets, and probabilistic predictions coming from a machine learning model. For all data, we use 50,000 points and 10,000 query points.

In the machine learning setup, we consider popular image datasets CIFAR10 and CIFAR100. Each contains 50,000 training and 10,000 test images, with 10 and 100 different labels respectively. We train two neural networks, $M_1$ and $M_2$, on a classification task on CIFAR100. % Specifically, we perform transfer learning using EfficientNetB0~\cite{tan2019efficientnet} pretrained on imagenet as a backbone, with ($M_1$) and without ($M_2$) fine-tuning. 
They achieve 80.22\%, and 71.74\% test accuracy respectively. From each model, we produce two sets of probabilistic predictions: $(\text{trn}_{i},\text{tst}_i)$, for $i \in \{1,2\}$. By $Q\to D$ we mean we query dataset $D$ with queries $Q$. Since the network is trained to minimize the total KL divergence, these predictions lie on the $\triangle^{99} \subset \mathbb{R}^{100}$ equipped with KL divergence.

We also consider a model trained to 95.2\% test accuracy on CIFAR10 and extract its probabilistic predictions on training and test points. These predictions are contained in $\mathbb{R}^{10}$. We also use the standard Corel Image Features data contained in $\triangle^{99}$. % and a uniform random of $\triangle^{99}$. 

\myparagraph{Compiler and hardware.}
Software was compiled with {Clang 14.0.3}. The experiments were done on a single core of a 3.5 GHz ARM64-based CPU with 4MB L2 cache using 32GB RAM. We observed similar speed ups on an x86-64 CPU.

\subsection{Nearest neighbour comparisons}
We first compare the sets of nearest neighbours obtained by using Euclidean distance and the KL divergence.
For tst$_1\to$trn$_1$, we find the 10 nearest for each query in tst$_1$ with respect to the KL divergence and Euclidean distance.
%For 10,000 queries in tst$_{1}$, we find the 10 nearest neighbours for each query in trn$_{1}$ with respect to the KL divergence and Euclidean distance. 
For each query, we compare the two sets of neighbours while disregarding order. Of the 10,000 queries, 9,962 had different sets of nearest neighbours with 134 having no common nearest neighbours. This is expected as the geometry of KL balls can vary depending on the location of the center whereas the Euclidean balls grow more uniformly. A lower dimensional example with three nearest neighbours computed can be seen in Figure~\ref{fig:disjoint_nns}. When the sets are ordered, the average of number of neighbours with matching indices is 0.8422, with only two queries having the same nearest neighbours in the same order.
\begin{figure}
    \centering
    \includegraphics[width = 0.4\textwidth]{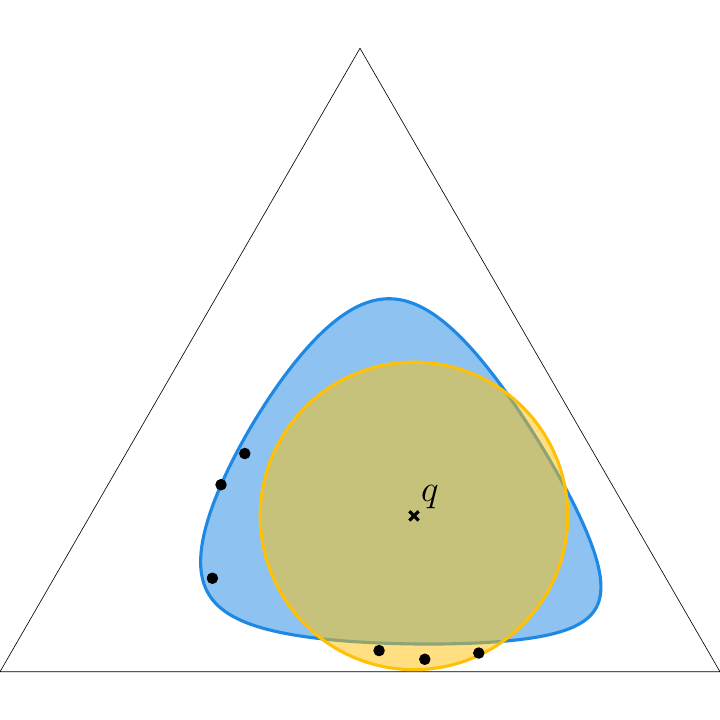}
    \caption{Three nearest neighbours of a query $q$ with respect to the KL divergence and the Euclidean distance on $\triangle^{2}$. The blue area is the KL ball and the yellow is the Euclidean ball whose radius is determined by the respective third nearest neighbour.}
    \label{fig:disjoint_nns}
\end{figure}

\subsection{Baselines} We stress that there are no robust libraries for the exact and approximate Bregman nearest neighbour computations. We compare our package to Cayton's experimental implementation of Bregman ball-trees~\cite{CaytonBBTrees} (\new{BBT}) for exact and approximate nearest neighbours. Two other available (experimental) implementations~\cite{VPTree} are not usable, due to severe compilation issues (the code is non-portable) and limited documentation.

We additionally compare our package to the fastest implementation in NMSLIB~\cite{nmslib} for Bregman divergences. We stress that these  methods are recall-approximate: they may return the correct nearest neighbour, but offer no guarantees (unlike our method). These methods are therefore not in direct competition with our method, but it is interesting to observe the trade-offs between efficiency and guarantees.
%More generally, while there is a vast literature, implementations and benchmarks on non-metric similarity search (for example using the cosine dissimilarity) -- Bregman divergences are not included. 

\subsection{Exact queries}
In Table~\ref{tab:tree_build_time}, we measure the construction time of Kd-trees and ball trees. We note that the same Kd-tree works for any decomposable divergence for either direction, while a ball tree depends on the given divergence and direction.

\begin{table}
    \centering
    \caption{Build time for Kd-trees and Bregman ball-trees for different divergences and data sets.}
    \begin{tabular}{l r r r r}\toprule
        & Kd-tree & SqEuc ball tree & KL ball tree & IS ball tree \\\cmidrule(lr){2-2}\cmidrule(lr){3-3}\cmidrule(lr){4-4}\cmidrule(lr){5-5}
         trn$_1$  & 0.43s & 5.75s & 10.05s & 7.71s\\
         Corel 64 & 0.08s & 1.30s & 6.27s  & 6.08s\\
         CIFAR10  & 0.04s & 0.20s & 1.30s  & 0.88s\\
         \bottomrule
    \end{tabular}
    \label{tab:tree_build_time}
\end{table}
Table~\ref{tab:Exact_Comps} shows the speed up in finding nearest neighbours using our method compared to linear search and BBT. We use the KL, IS, BL, and a hybrid divergence. For exact queries with the KL divergence on the 100-dimensional CIFAR data sets we observe $\approx100\times$ speed compared to the linear search. We compare our speeds to Cayton's Bregman ball trees, achieving minimum $3\times$ speed up with KL divergence and up to $20\times$ speed up for the IS divergence. As BBT has not implemented BL or hybrid divergences, these times are not available.

\begin{table}
    \centering
    \caption{Runtimes of Kd-trees compared to Bregman ball-trees and linear search. Speed ups are in comparison to Kd-tree times. The Hybrid divergence is 0.9KL + 0.1Euc.}

\begin{tabular}{llrrrrrrrr}\toprule
                                                                                 &        & \multicolumn{2}{c}{tst$_1\to$trn$_1$} & \multicolumn{2}{c}{tst$_2\to$trn$_1$} & \multicolumn{2}{c}{Corel64} & \multicolumn{2}{c}{CIFAR10} \\\cmidrule(lr){3-4}\cmidrule(lr){5-6}\cmidrule(lr){7-8}\cmidrule(lr){9-10}
\multirow{4}{*}{\rotatebox{90}{Kd-tree}}                                                         & KL     & \multicolumn{2}{c}{3.06s}             & \multicolumn{2}{c}{3.66s}             & \multicolumn{2}{c}{18.26s}  & \multicolumn{2}{c}{0.30s}   \\
                                                                                 & IS     & \multicolumn{2}{c}{24.95s}            & \multicolumn{2}{c}{26.65s}            & \multicolumn{2}{c}{67.95s}  & \multicolumn{2}{c}{1.13s}   \\
                                                                                 & BL     & \multicolumn{2}{c}{4.72s}             & \multicolumn{2}{c}{5.78s}             & \multicolumn{2}{c}{27.05s}  & \multicolumn{2}{c}{0.49s}   \\
                                                                                 & Hybrid & \multicolumn{2}{c}{5.55s}             & \multicolumn{2}{c}{6.44s}             & \multicolumn{2}{c}{21.94s}  & \multicolumn{2}{c}{0.50s}   \\\midrule\midrule
\multirow{4}{*}{\rotatebox{90}{Linear}} & KL     & 281.90s        & 92.12$\times$        & 286.63s        & 78.31$\times$        & 177.77s    & 9.74$\times$   & 30.53s   & 101.77$\times$   \\
                                                                                 & IS     & 277.87s        & 11.14$\times$        & 274.58s        & 10.30$\times$        & 173.79s    & 2.05$\times$   & 30.20s   & 23.01$\times$    \\
                                                                                 & BL     & 88.59s         & 18.77$\times$        & 87.84s         & 15.20$\times$        & 55.46s     & 2.05$\times$   & 8.89s    & 18.14$\times$    \\
                                                                                 & Hybrid & 309.21s        & 55.71$\times$        & 311.63s        & 48.39$\times$        & 196.99s    & 8.98$\times$   & 34.20s   & 68.40$\times$    \\\midrule\midrule
\multirow{4}{*}{\rotatebox{90}{BBT}}    & KL     & 9.62s          & 3.14$\times$         & 14.33s         & 3.92$\times$         & 99.10s     & 5.43$\times$   & 0.91s    & 3.03$\times$     \\
                                                                                 & IS     & 507.45s        & 20.34$\times$        & 614.97s        & 23.08$\times$        & 397.97s    & 5.86$\times$   & 3.53s    & 3.53$\times$     \\
                                                                                 & BL     & N/A            &                      & N/A            &                      & N/A        &                & N/A      &                  \\
                                                                                 & Hybrid & N/A            &                      & N/A            &                      & N/A        &                & N/A      &                 \\\bottomrule
\end{tabular}
    \label{tab:Exact_Comps}
\end{table}
\begin{maybeapprox}{ApproxExperiments}
\ifapprox
\subsection{Approximate query experiments}
\else
\subsection{Approximate Bregman queries}
\fi
Given $\epsilon$, an \new{approximate nearest neighbour query} must return each nearest neighbour $x'$ such that $D_F(q \| x') \le (1+\epsilon)D_F(q \| x)$, where $x$ is the true nearest neighbour.

To evaluate our method for approximate queries, we compare it with an implementation of Bregman Ball trees (\new{BB-trees}) by Cayton~\cite{CaytonBBTrees, BBTreeGithub}.  It is specialized for KL, with \emph{experimental support} for IS. Unlike our method, extending it to other divergences is nontrivial.

%One method for searching for approximate nearest neighbours in the Euclidean metric is to apply a $(1+\epsilon)^2$ buffer. For the case of Bregman divergences we instead use a $(1+\epsilon)$ factor for interpretability. This agrees with Cayton's ball tree implementation for approximate queries.

We compare the query times for the Kd-tree search to the Bregman ball tree search for a range of $\epsilon$ values in Figure~\ref{fig:kd_bb_eps}. Our method is between 3-5 times faster for KL queries, and between 5-15 times faster for IS queries.
\begin{figure}
    \centering
    \includegraphics[width = .49\textwidth]{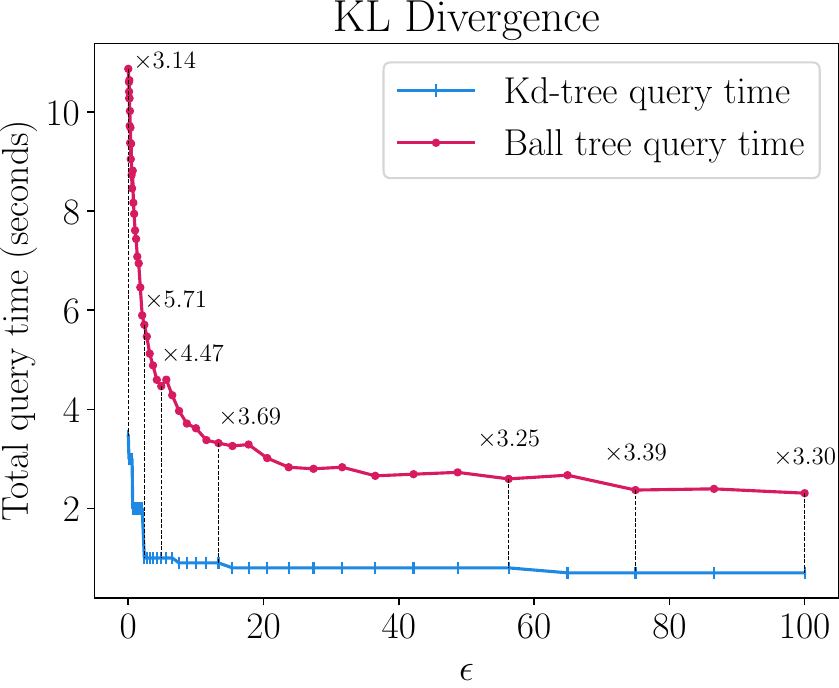}
    \includegraphics[width = .49\textwidth]{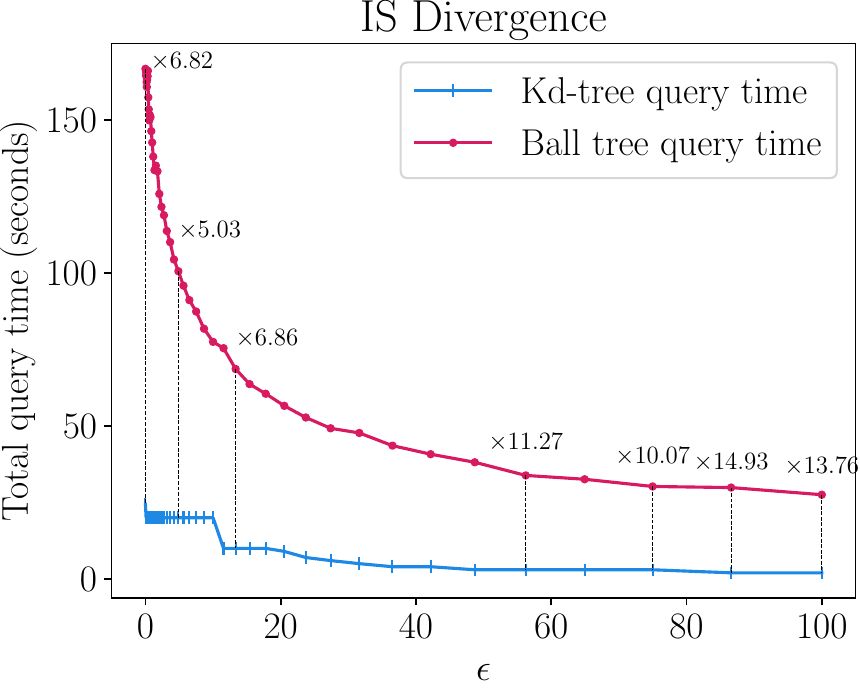}
    \caption{Total query time compared for $(1+\epsilon)$-approximate nearest neighbours for tst$_{1}\to$trn$_{2}$ (lower is better). Left is KL and right is IS divergence. Starting from $\epsilon = 0.1$. Vertical bars mark the speed up of Kd-trees over ball-trees for a given $\epsilon$.}
    \label{fig:kd_bb_eps}
\end{figure}

Our Kd-tree method works with arbitrary decomposable divergences (computed in either direction): one can either use a predefined divergence, or implement a custom one. This only requires implementing a single function in the user's code that computes the divergence -- no changes to the Kd-tree library are required. This allows the method to work out of the box in various contexts.

The above is in contrast with Bregman Ball trees: they are more general but require tailoring to different divergences~\cite{CaytonBBTrees}. Also, there is a big difference between the simple squared Euclidean case and the Bregman case. Finding a projection onto a (dual) Bregman ball generally requires performing a 1-dimensional convex optimization. In practice, this is done using a binary search, with a full divergence computation at each step, making each step $\Omega(d)$. In our case this entire projection is $O(1)$. As evidenced by BB-tree's relatively lower performance for the IS divergence, extending BB-trees to other divergences poses an algorithmic challenge.

\subsection{An unfair comparison with fast heuristics}
The Non-Metric Space Library~\cite{nmslib} by Naidan has algorithms adapted for working with Bregman divergences and other non-metric distances. Benchmarks for methods have been published specifically for the KL and IS divergences~\cite{nmslib_benchmarks}. In particular, the small-world graph (SWG)~\cite{hnswg_search} search is considered a state-of-the-art method. For brevity (and because these heuristic methods do not directly compete with methods for exact and approximate queries), we limit the comparison to this method. 

Unlike Kd-trees, the SWG method does not offer guarantees on the number of correct nearest neighbours. While they tend to behave well in practice,  Indyk and Xu showed~\cite{indyk23} that such methods can fail catastrophically. Generally, recall tends to drop in higher dimensions. In particular, in dimension 100, 15.71\% results for the KL divergence contained some incorrect nearest neighbours; in 4.54\% of cases, \emph{all} of the reported neighbours were incorrect. 

In any case, the benchmarks reported in Table~\ref{tab:Kd_SW_Comp} reveal an interesting trade-off. Compared to our implementation, the SWG offers faster query time (typically one order of magnitude faster), at the cost of slower build time (typically two orders of magnitude slower). We reiterate that these methods do not provide performance guarantees --- while our method does. Overall, SWG is useful for performing numerous imprecise searches, while Kd-trees are useful for
fewer searches or when guarantees are required.

\begin{table}
\centering
    \caption{Comparing Kd-tree and SWG method for 10 nearest neighbours. For SWG, error frequency is number of times $<10$ correct nearest neighbours are returned. Min recall is the minimum number of correct nearest neighbours. The Hybrid (H) divergence is 0.9KL+0.1Euc.}
    %\resizebox{\textwidth}{!}{%
\begin{tabular}{l l r r r r r r r r r}\toprule
 &  & \multicolumn{1}{c}{\shortstack{Kd-tree\\Build\\time}} & \multicolumn{1}{c}{\shortstack{Kd-tree\\query\\time}} & \multicolumn{1}{c}{\shortstack{SWG\\build\\time}} & \multicolumn{1}{c}{\shortstack{SWG\\query\\time}} & \multicolumn{1}{c}{\shortstack{Avg\\recall}} & \multicolumn{1}{c}{\shortstack{Error\\freq}} & \multicolumn{1}{c}{\shortstack{Min\\recall}} & \multicolumn{1}{c}{\shortstack{Min\\recall\\freq}} &  \\\cmidrule(lr){3-3}\cmidrule(lr){4-4}\cmidrule(lr){5-5}\cmidrule(lr){6-6}\cmidrule(lr){7-7}\cmidrule(lr){8-8}\cmidrule(lr){9-9}\cmidrule{10-10}
\multirow{3}{*}{KL} & tst$_2\to$trn$_1$ & 0.43s & 5.76s & 4.84s & 1.32s & 0.928 & 1571 & 0 & 454 \\
                    & Corel64           & 0.08s & 29.60s& 6.99s & 1.70s & 0.998 & 152  & 7 & 5\\
                    & CIFAR10           & 0.04s & 0.45s & 2.90s & 0.59s & 0.998 & 76   & 1 & 1 \\\midrule\midrule
\multirow{3}{*}{IS} & tst$_2\to$trn$_1$ & 0.43s & 29.29s& 8.84s & 2.31s & 0.883 & 3292 & 0 & 387 \\
                    & Corel64           & 0.08s & 76.03s& 17.13s& 4.05s & 0.900 & 4608 & 0 & 1 \\
                    & CIFAR10           & 0.04s & 2.05s & 2.90s & 0.85s & 0.991 & 287  & 0 & 17  \\\midrule\midrule
\multirow{1}{*}{BL} & All data sets     &       &       & N/A   & N/A   & N/A   & N/A  & N/A & N/A\\\midrule\midrule
\multirow{1}{*}{H} & All data sets&  &             & N/A   & N/A   & N/A   & N/A  & N/A & N/A\\\bottomrule
\end{tabular}%}
    \label{tab:Kd_SW_Comp}
\end{table}

\myskip{
\subsection{Image searches}
Consider a neural network $M$ which misclassifies the input image $x_i$. Then $M(x_i)\in\triangle^{d}$ and $\arg\max M(x_i) = \hat{y}_i\neq y_i$. We want to view the $k$ nearest neighbours of $M(x)$. As the model $M$ is trained by minimizing cross entropy, which is a constant term from KL divergence. Thus, we want to compute these neighbours with respect to $D_{\text{KL}}(M(x)\|\bullet)$, namely $M(x_{i_1}),\dots,M(x_{i_k})$. We then view the images $x_{i_{j}}$ as a heuristic to improve $M$.

We use our classification model on CIFAR10 in the following examples. In Figure~\ref{fig:corr_nnExample}, we have $x$ a correctly classified image of a car. We see that the 10 neighbours of the $M(x)$ are more representative of the $x$ than the 10 nearest neighbour of $x$ in the original space.
\begin{figure}
    \centering
    \includegraphics[width = .15\textwidth]{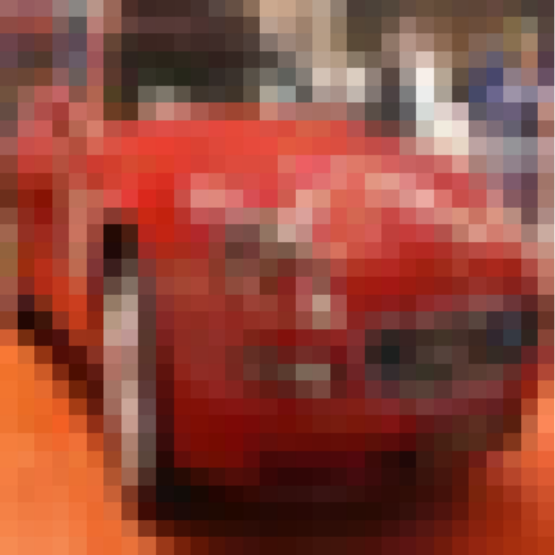}
    \includegraphics[width = \textwidth]{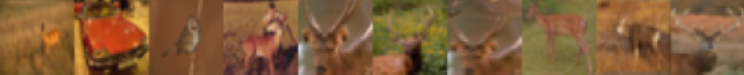}
    \includegraphics[width = \textwidth]{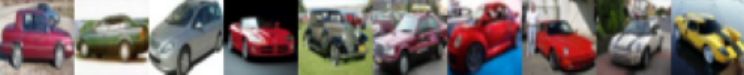}
    \includegraphics[width = \textwidth]{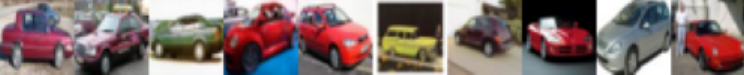}
    \caption{Top image is a correctly classified query point. Each row of images are 10 nearest neighbours. Top row is with Euclidean distance on images. Second row is Euclidean distance in probability space. Third row is KL divergence in probability space. Left images are closer to the query.}
    \label{fig:corr_nnExample}
\end{figure}

In Figure~\ref{fig:mis_nnExample}, we take the image of a dog misclassified as a horse. We similarly see that the nearest neighbours in the image space are not relevant to the classification of $x$. Looking at the Euclidean nearest neighbours on $M(x)$, we see all of the neighbours are classified as a horse with two other misclassifications (third neighbour is a dog and ninth neighbour is a bird). Looking at the KL nearest neighbours, the third nearest neighbour is correctly classified dog, the second and ninth neighbours are dogs misclassified as horses, and the rest are correctly classified horses. We note that none of the nearest neighbours are shared between the nearest neighbours on probability space.

Both lists of neighbours on the probability space tells us that $x$ is misclassified as a horse, but the KL nearest neighbours gives us a better indication of the correct class.
\begin{figure}
    \centering
    \includegraphics[width = .15\textwidth]{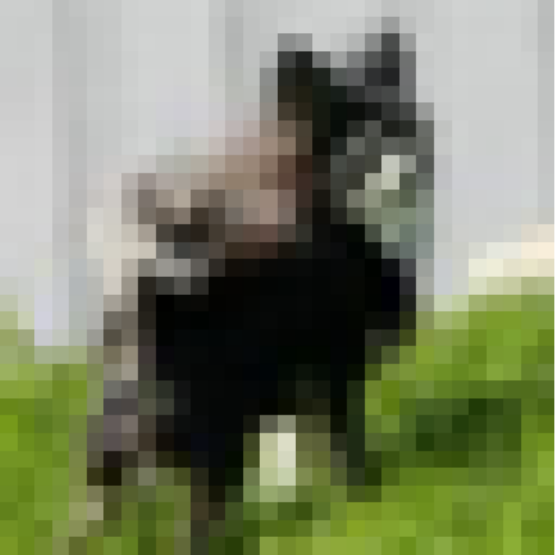}
    \includegraphics[width = \textwidth]{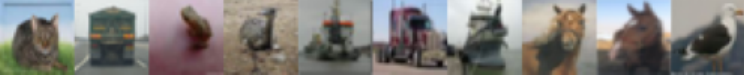}
    \includegraphics[width = \textwidth]{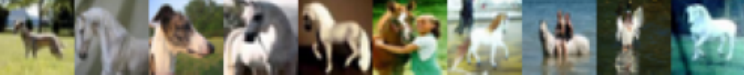}
    \includegraphics[width = \textwidth]{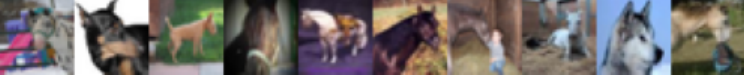}
    \caption{Top image is a incorrectly classified query point. Each row of images are 10 nearest neighbours. Top row is with Euclidean distance on images. Second row is Euclidean distance in probability space. Third row is KL divergence in probability space. Left images are closer to the query.}
    \label{fig:mis_nnExample}
\end{figure}
}
\end{maybeapprox}

\section{Summary}
We proved several results on Bregman divergences, demonstrating that the geometries they induce are well-behaved. In particular, we show that the lack of symmetry and triangle inequality does not preclude them from
being used as a measurement for Kd-trees. This is perhaps unexpected, since the triangle inequality is typically used to prove the correctness of Kd-trees. Furthermore, we show that certain additional properties of decomposable Bregman divergences enable an efficient query algorithm. These theoretical results provide the basis for an efficient implementation, whose properties are outlined below.

\begin{itemize}
    \item Computational complexity: a crucial operation is optimized to work in $O(1)$ time. In comparison, several popular Euclidean Kd-trees implementations use a naive $O(d)$ algorithm.
    \item Speed: it is up to 100$\times$ faster than linear search and between 3 and 20$\times$ faster than competing methods on practical data in dimension $100$.    
    \item Simplicity: the algorithm is simple which makes it more likely to be adopted in practice.
    \item Ease of use: works for any decomposable Bregman divergence out of the box (competing approaches requires custom, nontrivial implementation for each divergence).
    \item Flexibility: handles exact and guaranteed $\epsilon$-approximate Bregman queries with divergence  computed in either direction.
\end{itemize}

From an applied perspective, one can now perform efficient queries for practical data measured with the KL divergence, in particular on medium-dimensional data coming from machine learning. This opens up new ways of using computational geometry algorithms within machine learning. 

On the theoretical side, this work opens up new questions. Of primary importance is the expected computational complexity of a Kd-tree query. This problem is  significantly more involved than in the Euclidean case, and will require developing novel proof techniques and deepening our understanding of the geometries induced by Bregman divergences.

%In particular, we can now efficiently query collections of probability vectors measured with the Kullback--Leibler divergence. One interesting source of such data are predictions output by artificial neural networks, which naturally use the KL divergence. This is likely related to the low intrinsic dimension of such data. Interestingly using another divergence for this data results in much slower computations -- suggesting that a meaningful definition of intrinsic dimension should depend on the considered distance or divergence.

%We stress that Kd-trees have some unique features that distinguish them from alternative methods, like ball-trees. In particular, we guarantee $O(1)$ worst-case complexity for each node visit (independent from the dimension of the data). Also, the construction phase is oblivious to the choice of divergence.
%It makes the method an interesting choice for situations in which queries with respect to a number of different divergences are performed.

%Overall, the main takeaway is that Bregman Kd-trees achieve surprisingly good speedups on practical data sets.

\bibliographystyle{plainurl}
{
% \footnotesize
\bibliography{main, cubicle}
}

\appendix
\section{Additional tests} \label{sec:extra_tests}
\subsection{Higher dimensional experiments}
%\myparagraph{Randomly sampling the simplex}
%For the following experiments, we have three methods for sampling an open simplex. The first is to uniformly sample $(0,1)^d$ and normalize by the $L_1$ norm, the second is to uniformly sample the unit circle and apply the map $(x_i)_{i=1}^d\mapsto (x_i^2)_{i=1}^d$, coming from the Fisher isometry. The third is due to Devroye~\cite{DevroyeSampling}. Visualizations of the three methods can be seen in Figure~\ref{fig:Simplex_samplings}. In the three images, there is a difference in where the points tend to cluster. The first method is sparse around the corners, the second is dense around the edges and corners, and the third is more classically uniform. In context of machine learning, we often see data points clustering at the corners of the simplex, best modeled by the second method.
% \begin{figure}
%     \centering
%     \includegraphics[width=0.3\linewidth]{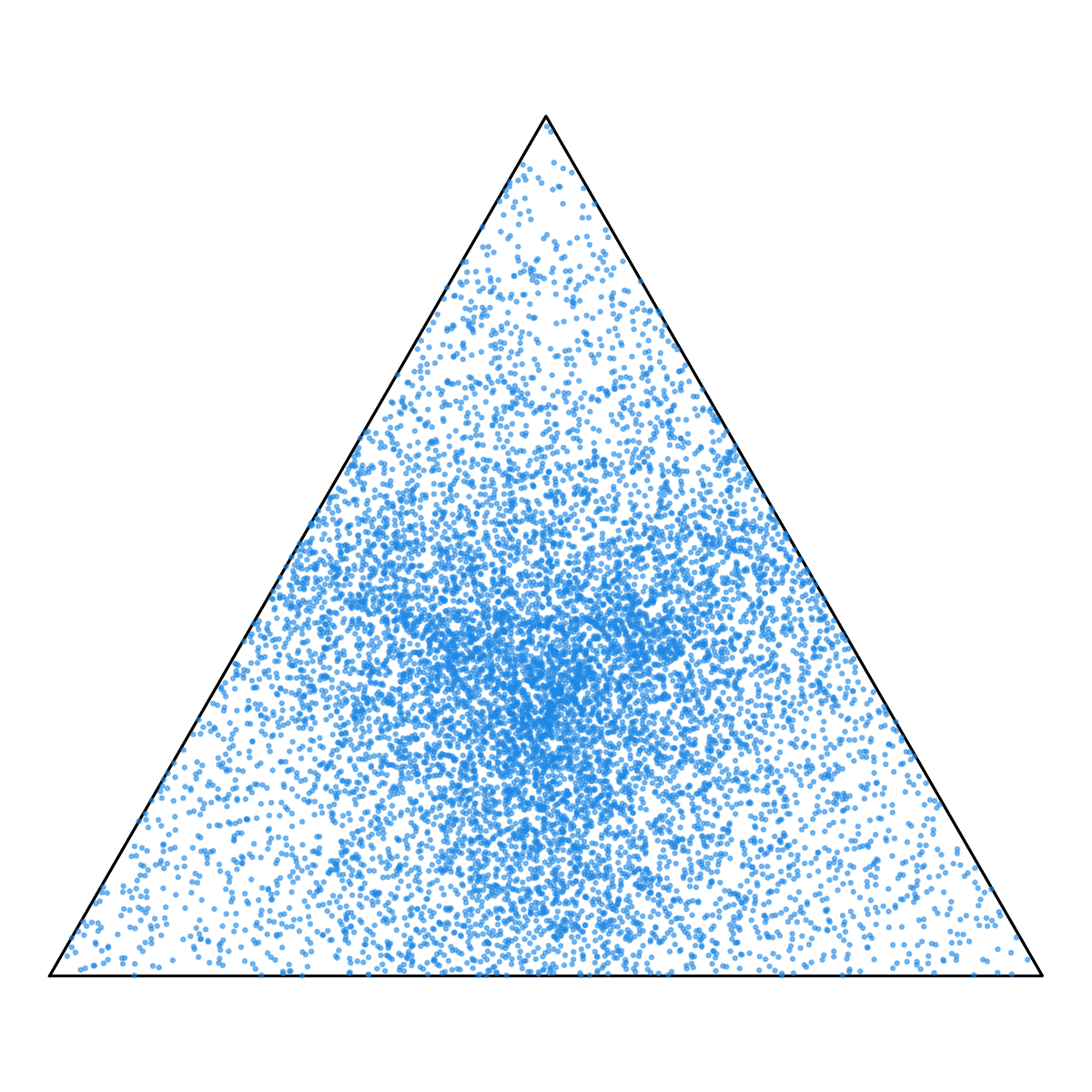}
%     \includegraphics[width=0.3\linewidth]{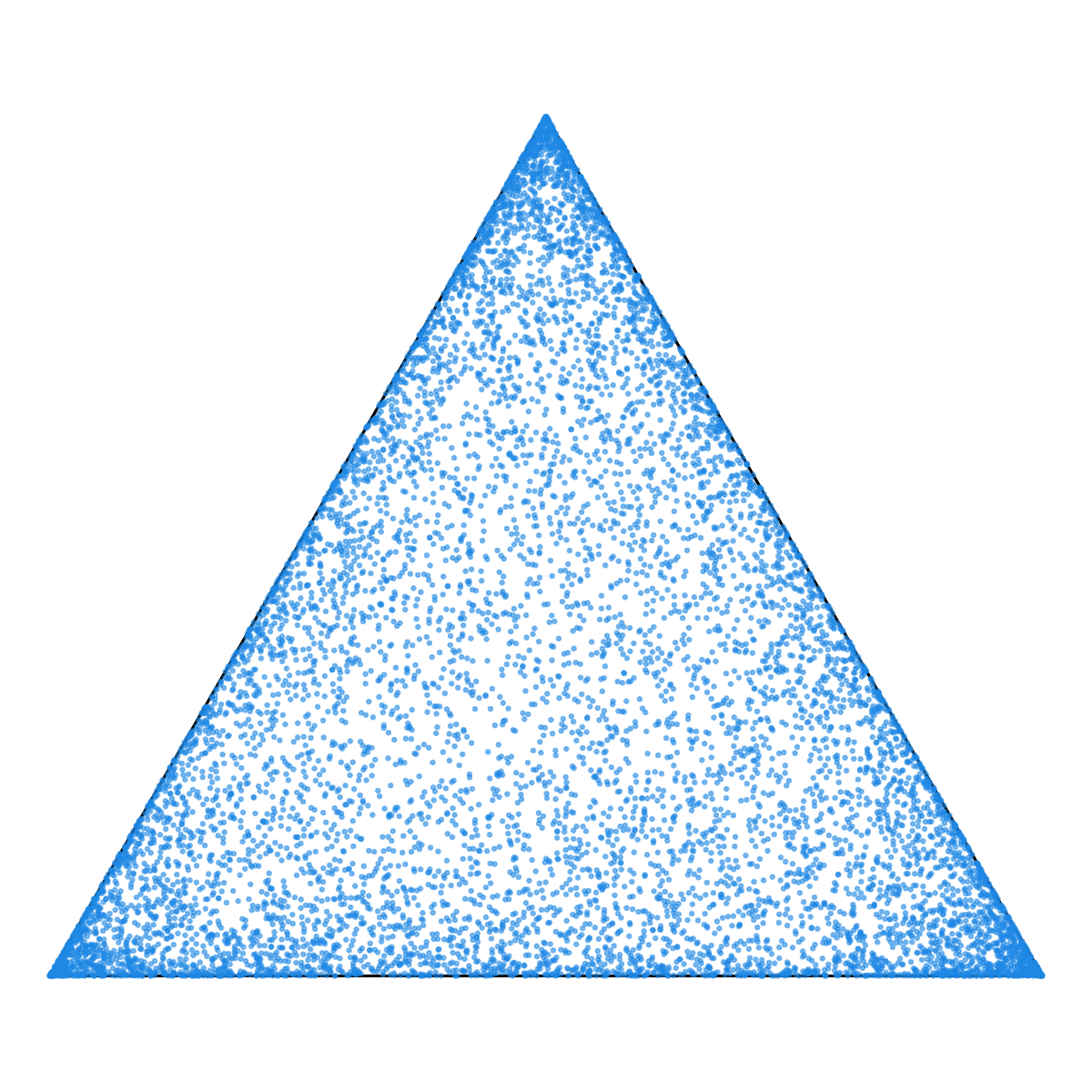}
%     \includegraphics[width=0.3\linewidth]{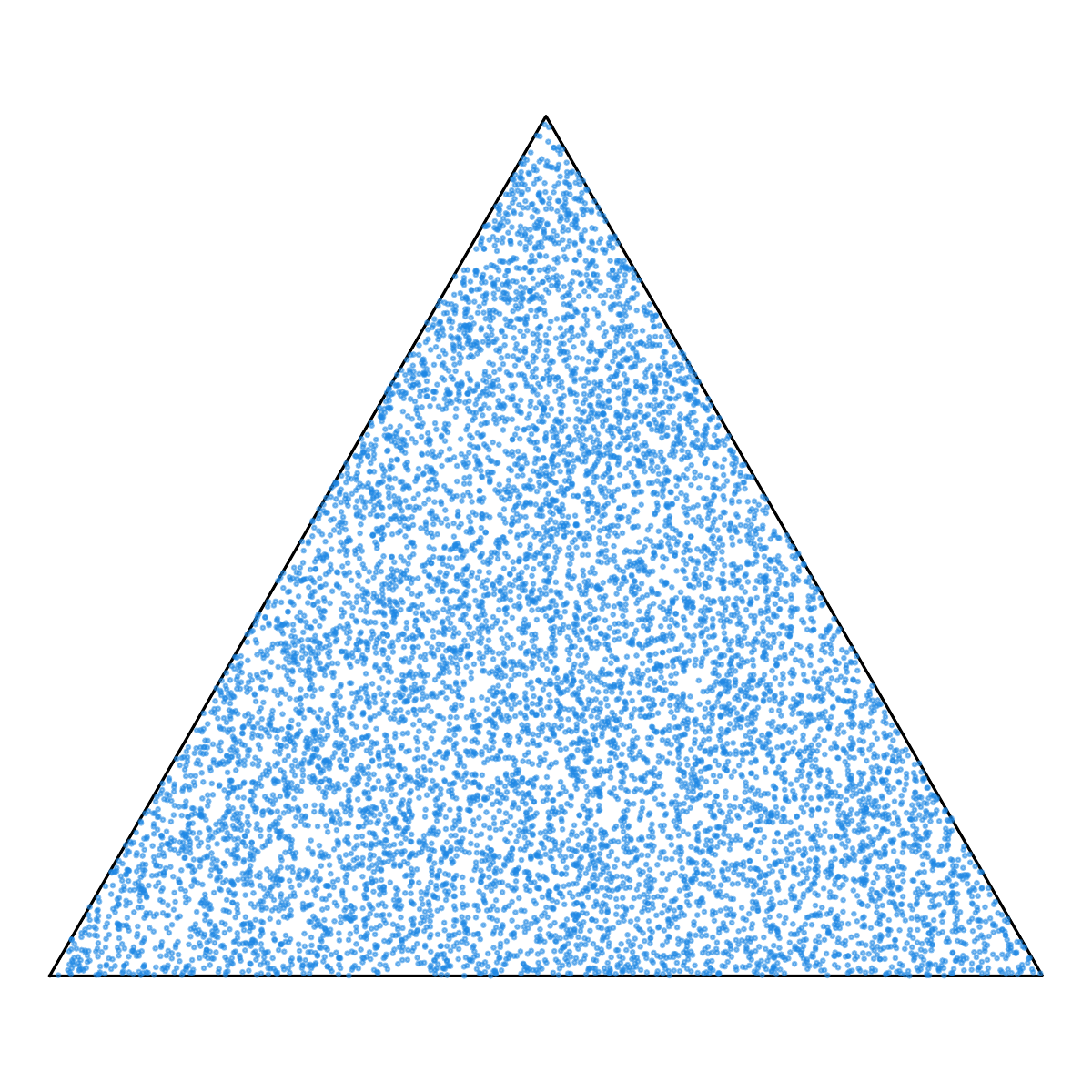}
%     \caption{Distribution of the three methods of sampling $\triangle^2$. Left-most figure is the result of sampling (0,1)^3 with $L_1$ normalization, center figure is the result of sampling the $\mathbb{R}^3$ sphere's positive orthant and applying the Antonelli isometri, and the right-most figure is from Devroye's sampling method.}
%     \label{fig:Simplex_samplings}
% \end{figure}

\myparagraph{Baselines in higher dimensions.}
For 50,000 data points and 10,000 query points sampled sampled from the simplex, we can compare query times between Kd-trees and Cayton's BBT. In Table~\ref{tab:dimension_comparison}, we compare query times in higher dimensions. Although Kd-trees are often said to be slower in higher dimensions, we see a speed up in our method even at $500$ dimensions.
\begin{table}[H]
    \centering
    \caption{Kd-tree and BBT 10 nearest KL--neighbours search times for increasing dimensions.}
\begin{tabular}{@{}lllllll@{}}\toprule
\multicolumn{1}{c}{Dimension}     & \multicolumn{1}{c}{100}     & \multicolumn{1}{c}{150}     & \multicolumn{1}{c}{200}     & \multicolumn{1}{c}{250}     & \multicolumn{1}{c}{500}       & \multicolumn{1}{c}{1000}      \\\cmidrule(lr){1-1}\cmidrule(lr){2-2}\cmidrule(lr){3-3}\cmidrule(lr){4-4}\cmidrule(lr){5-5}\cmidrule(lr){6-6}\cmidrule(lr){7-7}
Kd-tree query & 208.39s & 321.54s & 469.21s & 588.67s & 1,229.72  & 2,689.70s \\
BBT query     & 247.80s & 402.93s & 517.07s & 625.38s & 1,237.84s & 2,410.28s \\
Speed-up      & $\approx1.19\times$ & $\approx1.25\times$ & $\approx1.10\times$ & $\approx1.06\times$ & $\approx1.00\times$ & $\approx0.90\times$\\\bottomrule
\end{tabular}
    \label{tab:dimension_comparison}
\end{table}

\myparagraph{SW-graph comparison}
In comparison to Kd-trees, SWG has slow build times. We compare the SWG build time on 500,000 points in $\triangle^{999}\subset \mathbb{R}^{1000}$, with parameters reduced for speed while maintaining >0.9 average recall for 10 queries. SWG build time was 1464.07s, while Kd-tree build time and query time were 19.44s and 25.76s respectively. The build time for SWG is >30$\times$ longer than the sum of build and query time for Kd-trees.

\subsection{Other exact query experiments}
For these additional tests, $\square^{100}$ is a uniform sample of the unit cube with the same data sizes as above. In \cref{tab:more_queries}, we record total query time of other possible pairs of prediction data and $\square^{100}.$
\begin{table}
    \centering
    \caption{Additional total query time comparisons.}
\begin{tabular}{l l r r r r}\toprule
&& \multicolumn{1}{c}{tst$_{1}\to$trn$_{2}$} & \multicolumn{1}{c}{$\square^{100}$}  &\multicolumn{1}{c}{trn$_{2}\to$trn$_{1}$} & \multicolumn{1}{c}{trn$_{1}\to$trn$_{2}$}\\\cmidrule(lr){3-3}\cmidrule(lr){4-4}\cmidrule(lr){5-5}\cmidrule(lr){6-6}
\multirow{3}{*}{KL}    & Kd-Tree       & 4.46            & 17.22         & 12.73          &  264.84s          \\
                       & Linear Search & 285.16          & 1435.50       & 1431.93        &  407.80s          \\ \cline{3-6}
                       & Speed up      & 63.94$\times$   & 83.36$\times$ & 112.49$\times$ &  $1.54\times$     \\ \midrule\midrule
\multirow{3}{*}{IS}    & Kd-Tree       & 26.76           & 121.49        & 180.55         &  170.58s          \\
                       & Linear Search & 277.77          & 1377.00       & 1365.40        &  397.75s          \\ \cline{3-6}
                       & Speed up      & 10.38$\times$   & 11.33$\times$ & 7.56$\times$   &  $2.33\times$     \\ \midrule\midrule
\multirow{3}{*}{SqEuc} & Kd-Tree       & 0.41            & 1.89          & 1.87           &  89.95s           \\
                       & Linear Search & 23.32           & 116.44        & 116.466        &  24.36s           \\ \cline{3-6}
                       & Speed up      & 56.88$\times$   & 61.61$\times$ & 62.28$\times$  &  $0.27\times$     \\\bottomrule
\end{tabular}
    \label{tab:more_queries}
\end{table}

\ifappendix
\section{Proofs}\label{sec:proofs}
Here we provide careful, explicit proofs of the theorems stated in the paper. While the theorems are simple, the proofs are often subtle, as stressed in~\cite{Rockafellar+1970} in a more general context. The proofs assume the divergence computed \emph{from} a query $q$; their counterparts for the opposite direction are analogous.
%% Moved from Introduction
%    \item Careful, explicit proofs of its correctness --- despite the lack of triangle inequality. (While the theorems are simple, such proofs are often subtle, as stressed in~\cite{Rockafellar+1970}.)

\subsection{Proof of \cref{lem:connectedness}}\label{pf:connectedness}
We start by introducing a useful concept.
\fi

\putmaybeappendix{legendreTransform}

\ifappendix
\begin{relem}[Connectedness]
    Primal and dual Bregman balls are connected.
\end{relem}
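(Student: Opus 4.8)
The statement to prove is that primal and dual Bregman balls are connected. My plan is to handle the two cases separately, since they are of very different difficulty. The dual balls $B_F'(q;r) = \{y : D_F(y\|q) \le r\}$ are the easy case: fixing the first argument, the map $y \mapsto D_F(y\|q)$ is a convex function of $y$ (since $F$ is convex and the affine term is linear in $y$), so its sublevel sets are convex, hence connected. I would state this in one line, citing \cite{Bregman_Voronoi} for the convexity fact.

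The primal balls $B_F(q;r) = \{y : D_F(q\|y) \le r\}$ are the interesting case, because as the figures show they need not be convex. The plan is to transport the problem to the dual domain via the Legendre transform. Concretely: since $F$ is of Legendre type, it is strictly convex and differentiable, and by a standard result of Rockafellar it is in fact \emph{continuously} differentiable. Hence $\nabla F : \Omega \to \Omega^*$ is a continuous bijection onto the conjugate domain $\Omega^*$, with continuous inverse $\nabla F^*$ (the gradient of the Legendre conjugate), so $h = \nabla F$ is a homeomorphism. The key identity is that the Legendre duality exchanges the two arguments of the divergence: $D_F(q\|y) = D_{F^*}(\nabla F(y) \| \nabla F(q))$. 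Consequently $h$ maps the primal ball $B_F(q;r)$ bijectively onto the dual ball $B_{F^*}(\nabla F(q); r)$ in $\Omega^*$. The latter is convex, hence connected, and connectedness is preserved by homeomorphisms, so $B_F(q;r)$ is connected as well.

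The main obstacle — or rather, the point requiring the most care — is the passage between $\Omega$ and $\Omega^*$ being a genuine homeomorphism, which is exactly where assumption (III) in the definition of Legendre-type function does its work: it guarantees that $\Omega^* = \{\nabla F(x) : x \in \Omega\}$ is itself an open convex set on which $F^*$ is again of Legendre type, so that $D_{F^*}$ is well-defined and the dual-ball argument can be applied in $\Omega^*$. Without (III) one could take a convex restriction of $\Omega$ whose image under $\nabla F$ is non-convex, breaking the argument; the excerpt already flags this as a subtlety, so in the write-up I would simply invoke (I)–(III) to assert $h$ is a homeomorphism and $F^*$ is of Legendre type, citing \cite{Rockafellar+1970} and \cite{Bregman_Voronoi}, rather than reproving these convex-analytic facts. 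The remaining steps — verifying the duality identity and that $h$ carries primal balls to dual balls — are short computations that follow directly from the definitions of $D_F$, $D_{F^*}$, and $\nabla F^* = (\nabla F)^{-1}$.
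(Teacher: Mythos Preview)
Your proposal is correct and follows essentially the same route as the paper: dual balls are convex (hence connected), and primal balls are shown to be connected by pulling back convex dual $F^*$-balls through the homeomorphism $\nabla F:\Omega\to\Omega^*$ furnished by the Legendre transform, with the Legendre-type assumptions (I)--(III) guaranteeing that this map is indeed a homeomorphism onto a legitimate conjugate domain. The only cosmetic slip is notational: what you call ``the dual ball $B_{F^*}(\nabla F(q);r)$'' should be written $B'_{F^*}(\nabla F(q);r)$ in the paper's conventions.
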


\begin{proof}
The dual balls are trivially convex~\cite{Bregman_Voronoi}, hence connected.

%First consider the dual Bregman ball. Let $x,y\in B'_F(q;r)$ and $\ell$ be the line segement in $\Omega$ connecting the two. Then the points $(x,F(x))$ and $(y, F(y))$ lie below the vertically shifted tangent plane at $(q, F(q))$, and as the function $F$ is convex, we have that the lift of $\ell$ lies below this secant line of $(x,F(x))$ and $(y, F(y))$ by Jensen's Inequality~\cite{Rockafellar+1970}. As the lift of $x$ and $y$ both must lie on or below the shifted tangent plane, the secant line must lie on or below the plane. Therefore the lift of $\ell$ must also lie below the plane as well. So $B'_F(q;r)$ is convex and thus connected.
The primal balls are more interesting, so we show an explicit proof. First, recall that $F$ is strictly convex and differentiable, implying it is continuously differentiable.
Therefore, the Legendre transform of $F$ induces a homeomorphism $h:\Omega\to \Omega^*$. In particular, $h$ maps dual balls in $\Omega^*$ to primal balls in $\Omega$. Since connectedness is a topological property, any primal ball in $\Omega$ is connected as the homeomorphic preimage of a connected dual ball in $\Omega^*$.
\end{proof}

This particular proof is useful for clarifying the importance of the three assumptions in the definition of the Legendre-type function. (I) and (II) gives \emph{continuous} differentiability, and consequently the crucial homeomorphism. As for (III), let us show how things can go wrong without it. Specifically, if we allowed arbitrary convex restrictions of the domain. Consider $\Omega'$ as a restriction of $\Omega$ to the preimage under $h$ of a non-convex primal ball in $\Omega^*$. Since $\Omega'$ is convex, everything appears to work. However, the restricted $h$ now maps $\Omega'$ to a non-convex conjugate domain, where the Legendre transform is not well defined. The above proof would therefore fail if we restricted the domain in this way -- and we could not rule out the existence of non-connected balls. Condition (III) prevents us from making this mistake. Rockefellar~\cite{Rockafellar+1970} mentions that this is a very common mistake in general ---  it is also present in the Bregman divergence literature.
\fi

\ifappendix
\subsection{Proof of \cref{lem:pruning}}\label{pf:pruning}
\begin{relem}[Boundary Intersection]
Let $A \subset \mathbb{R}^d$ be an axis-aligned box of positive finite volume with boundary $\partial A$; $q \in \mathbb{R}^d \setminus A$ be the center of a Bregman ball $B$ of finite radius $r$.
    If  $B \cap \partial{A} = \emptyset$, then $A \cap \Omega$ lies in $\Omega \setminus B$.
\end{relem}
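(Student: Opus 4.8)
The plan is to reduce the statement to a purely topological separation argument, using the connectedness of Bregman balls (Lemma~\ref{lem:connectedness}) as the only nontrivial input. First I would observe that since $A$ is an axis-aligned box of positive finite volume in $\mathbb{R}^d$, its boundary $\partial A$ is homeomorphic to the $(d-1)$-sphere $S^{d-1}$. By the Jordan--Brouwer separation theorem, $\partial A$ therefore separates $\mathbb{R}^d$ into exactly two connected components: a bounded ``inside'' (the open interior of $A$) and an unbounded ``outside'' (the complement of $A$).

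Next I would use the hypothesis $B \cap \partial A = \emptyset$. This means $B$ is contained entirely in one of the two components, because $B$ is connected (Lemma~\ref{lem:connectedness}) and a connected set that meets neither component's complement within $\mathbb{R}^d \setminus \partial A$ must lie in a single component. To decide which component contains $B$, I would invoke the second hypothesis: the center $q$ of the ball satisfies $q \in \mathbb{R}^d \setminus A$, so $q$ lies in the outside component. Since $D_F(q\|q) = 0 \le r$ by Bregman Nonnegativity (Property~\ref{positivity}), we have $q \in B$ (assuming $q \in \Omega$; if $q \notin \Omega$ then the ball $B$, as a subset of $\Omega$, still must be examined, but in the Kd-tree application the query point lies in $\Omega$, and more generally $B$ is a connected subset of $\Omega \setminus \partial A$, so the argument below still pins it down once we know it is disjoint from the interior of $A$). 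Hence $B$ contains a point of the outside, so $B$ lies in the outside component, i.e. $B \cap A = \emptyset$ — in fact $B$ is disjoint even from the closed box $A$, since the interior is the inside component and $\partial A$ is excluded by hypothesis.

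Finally, from $B \cap A = \emptyset$ it follows immediately that $A \cap \Omega \subseteq \Omega \setminus B$: any point of $A$ that lies in $\Omega$ cannot be in $B$, so it lies in the complement of $B$ within $\Omega$. This is exactly the claimed conclusion.

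The main obstacle I anticipate is the subtlety in the case $q \notin \Omega$, which can occur because $A$ is a box in $\mathbb{R}^d$ while $B \subset \Omega$ and $\Omega$ may be a proper subset of $\mathbb{R}^d$. One clean way to handle this is to note that $B$ is a nonempty connected subset of $\Omega \setminus \partial A$ (nonempty when $r \ge 0$ and $\Omega$ nonempty, since it then contains points arbitrarily close to where $q$ would project, or simply because a ball of nonnegative radius around an interior generator is nonempty), and once we know $B$ avoids $\partial A$ and is connected, it lies in one component of $\mathbb{R}^d \setminus \partial A$; then a limiting or direct argument — e.g. that points of $B$ near the "far" side cannot be enclosed by the bounded interior of $A$ when $q$ is exterior, combined with the fact that the only way a ball centered outside $A$ could sit inside $A$ is to cross $\partial A$ — forces $B$ into the exterior. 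I would streamline this in the write-up by simply taking $q \in \Omega$ in the statement as is done implicitly throughout, since that is the only case relevant to Kd-tree queries, and remark that the general case is analogous. Everything else is routine application of the Jordan--Brouwer theorem and Property~\ref{positivity}.
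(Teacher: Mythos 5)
Your proposal is correct and follows essentially the same route as the paper: Jordan--Brouwer separation for $\partial A$, connectedness of $B$ (Lemma~\ref{lem:connectedness}), and the fact that the center $q \notin A$ lies in $B$, forcing $B$ into the unbounded component so that $B \cap A = \emptyset$. The side worry about $q \notin \Omega$ is moot, since by definition the center of a Bregman ball lies in $\Omega$ (the divergence is only defined on $\Omega \times \Omega$), which is exactly the simplification you adopt anyway.
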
\fi
\putmaybeappendix{pf3}

\ifappendix
\subsection{Proof of \cref{lem:domain_expansion}}\label{pf:domain_expansion}
\begin{relem}
    Let $D'_F$ be a decomposable Bregman divergence defined on $\Omega'\times \Omega'$. Then $D'_F$ is a restriction of a divergence defined on an axis-aligned box.
\end{relem}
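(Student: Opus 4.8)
\subsection*{Proof proposal}

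The plan is to peel off the one-dimensional structure hidden in the word ``decomposable'' and then glue the coordinate pieces back together over the largest box on which they live. Concretely: since $D'_F$ is decomposable, its generator splits as $F=\sum_{i=1}^{d}f_i$ with each $f_i$ a univariate function of Legendre type with domain $\omega_i\subseteq\mathbb{R}$. A convex subset of $\mathbb{R}$ is an interval, so each $\omega_i$ is an interval; I set $\Omega=\prod_{i=1}^{d}\omega_i$, which is an axis-aligned box. It contains $\Omega'$, since for $x\in\Omega'$ the quantity $D'_F(x\|y)=\sum_i D_{f_i}(x_i\|y_i)$ only makes sense when each $x_i$ lies in $\omega_i$, so the $i$-th coordinate projection of $\Omega'$ is contained in $\omega_i$. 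I then verify that $F$, viewed on all of $\Omega$, is still a function of Legendre type, so that $D_F$ is a genuine Bregman divergence on the box, and finally observe that $D_F$ restricts to $D'_F$ on $\Omega'\times\Omega'$.

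The Legendre-type check for $F$ on $\Omega$ is handled coordinatewise. Differentiability is immediate, since $\nabla F=(f_1',\dots,f_d')$. Strict convexity follows because the Hessian of $F$ is diagonal with strictly positive entries $f_i''$. For condition (III), a point approaching $\partial\Omega$ must have at least one coordinate $x_j$ approaching $\partial\omega_j$, and then $|f_j'(x_j)|\to\infty$ by condition (III) for $f_j$, forcing $\|\nabla F(x)\|\to\infty$; if some $\omega_j$ is all of $\mathbb{R}$ it contributes no boundary and is harmless. Since $D_F$ and $D'_F$ are both generated by the same separable sum $\sum_i f_i$, evaluating $D_F$ at points of $\Omega'\times\Omega'$ returns $\sum_i D_{f_i}(x_i\|y_i)=D'_F(x\|y)$, i.e.\ $D_F|_{\Omega'\times\Omega'}=D'_F$. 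The canonical example to keep in mind is the KL divergence on the open simplex $\triangle^{d-1}$, which is not a box but arises exactly this way as the restriction of GKL on the box $\mathbb{R}_{+}^{d}$.

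The only genuinely delicate point --- and the one worth stating carefully --- is the verification of condition (III) on the enlarged box, precisely the hypothesis whose role the surrounding discussion is emphasizing; the remaining items are bookkeeping. A secondary subtlety is that Bregman generators are only determined up to an affine term, but affine terms do not change the associated divergence, so there is no ambiguity in speaking of ``the'' box divergence $D_F$ that restricts to $D'_F$.
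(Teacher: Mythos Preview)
Your proof is correct and follows the same route as the paper's: extract the univariate generators $f_i$ with interval domains $\omega_i$, form the box $\Omega=\prod_i\omega_i$, and observe that the divergence generated by $F=\sum_i f_i$ on $\Omega$ restricts to $D'_F$. You add detail the paper leaves implicit---the explicit check that $F$ is of Legendre type on $\Omega$, including condition~(III)---but one small slip: you justify strict convexity via the Hessian entries $f_i''$, whereas the $f_i$ are only assumed once differentiable, so argue directly from strict convexity of each $f_i$ coordinatewise instead.
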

\fi
\putmaybeappendix{pf4}

\ifappendix
\subsection{Proof of \cref{lem:axis-aligned-proj}}\label{pf:axis-aligned-proj}
% Solution to Hyperplane Projection Problem
% Let D_F= \sum D_{f_i}, since each $f_i:\omega_i\to \R$ is legendre type, $D_F$ can be defined on $\left(\prod \omega\right)^2$?
\begin{relem}[Axis-Aligned Projection]
Let $F=\sum_{i=1}^{d}f_i$ be a decomposable function of Legendre type defined on an axis-aligned box.%, where each $f_i$ has domain $\omega_i\subseteq\mathbb{R}$.

Let $P\subset\mathbb{R}^{d}$ be an axis-aligned hyperplane such that $P\cap\Omega\neq\emptyset$. Let $q_P$ be the Bregman projection of a point $q \in \Omega$ onto $P$  with respect to $D_F$. We claim that $q_P$ coincides with the orthogonal projection of $q$ onto $P$ for computation in either direction.
% and let $F$ be a separable function of Legendre type. Then the decomposable Bregman divergence generated by $F$ has a projection in $q$ 
\end{relem}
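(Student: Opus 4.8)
The plan is to reduce the $d$-dimensional projection onto an axis-aligned hyperplane to a collection of independent one-dimensional problems, exploiting decomposability exactly as in the proof sketch for the ``from $q$'' direction, and then to observe that each one-dimensional problem is trivially solved by matching the coordinate of $q$. Concretely, let $P$ be the hyperplane with fixed $j$-th coordinate, say $p_j = t$ for all $p \in P$. First I would write, for the reverse direction, $D_F(p\|q) = \sum_{i=1}^d D_{f_i}(p_i\|q_i) = D_{f_j}(t\|q_j) + \sum_{i\neq j} D_{f_i}(p_i\|q_i)$. The first summand is a constant independent of the free coordinates $p_i$, $i \neq j$, so minimizing $D_F(p\|q)$ over $p \in P\cap\Omega$ amounts to minimizing each $D_{f_i}(p_i\|q_i)$ separately. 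By Bregman Nonnegativity (Property~\ref{positivity}) applied to each univariate divergence $D_{f_i}$, the minimum of $D_{f_i}(p_i\|q_i)$ is $0$, attained uniquely at $p_i = q_i$. Hence the minimizer is $q_P = (q_1,\dots,q_{j-1}, t, q_{j+1},\dots,q_d)$, which is precisely the orthogonal projection of $q$ onto $P$. The argument for the ``from $q$'' direction is identical with $D_{f_i}(p_i\|q_i)$ replaced by $D_{f_i}(q_i\|p_i)$, again using that each $f_i$ generates a genuine Bregman divergence on its interval domain.

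There is one point of care: we must ensure that the componentwise minimizer $q_P$ actually lies in $P \cap \Omega$, so that the infimum is attained inside the feasible region. Since $\Omega = \prod_{i=1}^d \omega_i$ is an axis-aligned box (this is where \cref{lem:domain_expansion} is used, so that we may assume $F$ is defined on such a box), the point $q_P$ has $q_i \in \omega_i$ for $i \neq j$ because $q \in \Omega$, and $t \in \omega_j$ because $P \cap \Omega \neq \emptyset$ forces the fixed value $t$ to be an admissible $j$-th coordinate. Therefore $q_P \in \Omega$, the infimum is a genuine minimum, and uniqueness follows from the strict convexity of each $f_i$ (equivalently, the equality case in Property~\ref{positivity}).

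I expect the main obstacle to be purely expository rather than mathematical: one must be careful that the object we call ``the Bregman projection from $q$'', namely $\operatorname{arginf}_{p\in P\cap\Omega} D_F(q\|p)$, is not literally the projection $\operatorname{proj}_F$ defined in the paper (which minimizes $D_F(x\|q)$, i.e.\ computes the divergence \emph{to} $q$), so Lemma~\ref{lem:proj} does not directly apply and we cannot invoke existence/uniqueness from it. The resolution is that for an axis-aligned hyperplane the decomposition makes the minimization trivial in both directions, so we get existence and uniqueness by hand and may legitimately speak of ``the'' Bregman projection onto $P$. Once this subtlety is flagged, the proof is the two short displays above.
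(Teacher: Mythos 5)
Your proposal is correct and follows essentially the same route as the paper's own proof: decompose $D_F$ coordinatewise, observe that the term in the fixed coordinate $j$ is constant over $P$, and apply Bregman Nonnegativity (Property~\ref{positivity}) to each remaining univariate term to force $p_i = q_i$, yielding the orthogonal projection in either direction of computation. Your extra remarks --- that the box domain $\Omega=\prod_i\omega_i$ guarantees the minimizer actually lies in $P\cap\Omega$, and that Lemma~\ref{lem:proj} does not directly cover the ``from $q$'' direction so existence and uniqueness are obtained by hand --- are points the paper makes only implicitly (the latter in the discussion following the lemma), and they strengthen rather than change the argument.
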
\fi
\putmaybeappendix{pf5}

\ifappendix
\subsection{Proof of \cref{lem:increment-projection}}\label{pf:increment-projection}
\begin{relem}[Updating Projection Divergence in Constant Time]
Let $F=\sum_{i=1}^df_i$ be a decomposable function of Legendre type, where each $f_i$ has domain $\omega_i\subseteq \mathbb{R}$. Then the projection divergence can be updated in constant time.
\end{relem}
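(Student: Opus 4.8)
The plan is to exploit the decomposable structure together with Corollary~\ref{lem:box-proj}, which tells us that the Bregman projection divergence onto an axis-aligned box decomposes as a sum of one-dimensional projection divergences, one per coordinate, and that the projection point is the coordinate-wise (Euclidean) projection onto the box. So throughout a query we maintain the single scalar $\texttt{box\_proj\_div} = D_F(q\|p) = \sum_{i=1}^d D_{f_i}(q_i\|p_i)$, where $p$ is the projection of $q$ onto the box at the current node, and show that each step of the recursion updates this scalar in $O(1)$.

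First I would set up notation: let the current splitting node have box $B = \prod_{i=1}^d [a_i,b_i]$, with projection point $p$ where $p_i = \mathrm{median}(a_i, q_i, b_i)$, and stored value $D_F(q\|p)$. When we descend to a child node, the Kd-tree construction changes exactly one of the $2d$ walls along a single coordinate, say coordinate $j$: the child box is $C$ with $[a_j,b_j]$ replaced by either $[a_j,c]$ or $[c,b_j]$ for the cut value $c$, and $[a_i,b_i]$ unchanged for $i\neq j$. Hence the projection point $p'$ of $q$ onto $C$ satisfies $p'_i = p_i$ for all $i\neq j$, so by decomposability $D_F(q\|p') = D_F(q\|p) - D_{f_j}(q_j\|p_j) + D_{f_j}(q_j\|p'_j)$. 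The two terms $D_{f_j}(q_j\|p_j)$ and $D_{f_j}(q_j\|p'_j)$ are single one-dimensional divergence evaluations, each $O(1)$ for the divergences of practical interest, and in fact at most one subtraction and one addition are needed (when $q_j$ already lies inside the new interval, $p'_j = q_j$ and the added term is $0$; when the more-promising child is visited first, $p_j = q_j$ for that child and the subtracted term is $0$). This matches lines 18 and 20 of Algorithm~\ref{alg:ann}, and Figure~\ref{fig:div_proj_update_scheme} illustrates it.

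Then I would note the base case and the reverse direction. At the root, the box is all of $\mathbb{R}^d$ (or a finite enclosing box), so $p = q$ (or the coordinate-wise clamp of $q$) and $D_F(q\|p) = 0$ (respectively the sum of the clamped one-dimensional divergences, computed once); this is a one-time $O(d)$ initialization, not part of the per-node cost. For divergences computed in the reverse direction, $D_F(p\|q)$, the same argument goes through verbatim because Corollary~\ref{lem:box-proj} and decomposability hold in either direction, as already remarked after \cref{lem:box-proj}; one simply swaps the arguments of $D_{f_j}$, exactly as the implementation does.

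The main obstacle — really the only subtlety — is making sure the projection point genuinely changes in at most one coordinate and that the correct one-dimensional term is swapped out. This rests on two facts that must be invoked cleanly: (i) that the Kd-tree's parent-to-child step modifies a single wall, which is just the definition of the data structure; and (ii) that the projection onto a box is coordinate-wise, i.e. $p_i$ depends only on $[a_i,b_i]$ and $q_i$, which is exactly the content of \cref{lem:axis-aligned-proj}/Corollary~\ref{lem:box-proj}. Given these, the update is forced and the constant-time claim is immediate; there is no heavy computation to carry out, only the bookkeeping of which term is added and which is removed.
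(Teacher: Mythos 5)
Your proposal is correct and follows essentially the same route as the paper's own proof: both invoke the coordinate-wise (Euclidean-clamp) characterization of the projection from Corollary~\ref{lem:box-proj}, observe that a parent-to-child step in the Kd-tree alters a single wall so the projection point changes in at most one coordinate, and update the stored divergence by subtracting one and adding one univariate term. Your extra remarks on the $O(d)$ root initialization and the reverse-direction case are harmless additions beyond what the paper's proof states explicitly.
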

\fi
\putmaybeappendix{pf6}

\ifapprox
\section{Approximate searches}
\subsection{Related work}
Many search methods have been adapted to work with Bregman divergences.
Ring-trees combined with a quad-tree decomposition have been proven to work sublinearly for finding approximate nearest neighbours by Abdullah, Moeller, and Venkatasubramanian~\cite{Bregman_ring_tree}. In 2013, Boytsov and Naidan developed their own Bregman VP-trees extension~\cite{BoytsovNaidan_VPTrees} for approximate nearest neighbours. Naidan later incorporated his VP-trees and Cayton's ball-trees into the Non-Metric Space Library (NMSLIB)\cite{nmslib}. This library also includes other approximate Bregman similarity searches including small world graphs~\cite{MalkovPonomarenkoLogvinovKrylov_SWG}. The hierarchical navigable small world graph has been a popular choice for similarity searches in vector databases~\cite{HNSW_Zilliz, HNSW_MariaDB, HNSW_MongoDB} and perform well in benchmarks for metrics~\cite{ANN_Benchmarks}. However, its implementation in NMSLIB is currently experimental for Bregman divergences~\cite{HNSW_Git}.
Recently Abdelkader, Arya, da Fonseca and Mount proposed an approach to proximity search in non-metric settings, which includes Bregman divergences~\cite{abdelkader2019approximate}; as we understand it, this has not yet been implemented.

\fi
\putmaybeapprox{ApproxExperiments}
\end{document}

\appendix

\section{From KL to machine learning} \label{sec:KL_interpretation}
We wish to motivate the usage of the KL divergence. To this end we discuss its information theoretical interpretation and usage in modern machine learning. This also serves as a motivation for our experiments.

\myparagraph{KL divergence.} Apart from the squared Euclidean distance, the most popular Bregman divergence is the Kullback--Leibler (KL) divergence. Interestingly, its definition predates Bregman divergences. Still, viewing it through the Bregman lens is informative. In this section, however, instead of treating is as yet another Bregman divergence, we emphasize its interpretation and usages.

The \new{KL divergence} between two probability vectors $p$, $q$ can be written as
\begin{align}
\text{KL}(p || q) &= \sum_i p_i \log_2\left(\frac{p_i}{q_i}\right) 
= \underbrace{\sum_i p_i \log_2\left(\frac{1}{q_i}\right)}_{\text{cross entropy}} 
 - \underbrace{\sum_i p_i \log_2\left(\frac{1}{p_i}\right)}_{\text{Shannon's entropy}},
\end{align}
namely the difference between the cross entropy between $p$ and $q$ and Shannon's entropy~\cite{shannon1948mathematical} of $p$. For this reason it is often called the relative entropy. To understand what the two parts mean, we take a short detour into information theory. 

\myparagraph{An interpretation.} Suppose we wish to encode -- as a string of bits -- the information about random events generated according to a probability vector $p$. Of course it is reasonable to choose shorter codes for more probable events. Given full information about $p$, the expected number of bits required is lower bounded by the Shannon's entropy of $p$. Similarly, if $q$ represents our assumption about the probabilities, the expected number of bits required is lower bounded by the cross entropy between $p$ and $q$. Being the difference between the two, KL computes the expected number of extra bits when encoding events from $p$ having prepared the code for $q$.

\begin{quote}
Intuitively, KL measures the loss of coding efficiency due to our ignorance, expressed in bits. We use this as a proxy for measuring how well one probability vector approximates another.
\end{quote}

We can now interpret the concepts derived from KL. The primal KL ball around $p$ with radius of $r$ bits contains all probability vectors that approximate $p$ with the expected loss of at most $r$ bits. Therefore, the primal KL--Hausdorff divergence between two collections of probability vectors $P$ and $Q$ expresses the maximum expected loss if distributions in $Q$ are reasonably used to approximate the distributions in $P$. The three remaining versions have analogous interpretations.

% used for comparing probability vectors, it is inherently sensitive to implausible events.
When KL is used for comparing probability vectors, its sensitivity to implausible events can be useful. Suppose $q_i = \epsilon$, but in reality $p_i \gg \epsilon$. In this case the KL is large, and tends to positive infinity as $\epsilon$ goes to $0$. This makes sense: there is no need to prepare a code for an event we \emph{deem} impossible; consequently, when the event does occur, we have no finite-length code to represent it with. Note that in this case the contribution of $p_i$ and $q_i$ to the Euclidean distance between $p$ and $q$ is small, in the order of $\sqrt{\epsilon}$. This hopefully suggests why KL may be preferred in certain situations -- despite its lack of both symmetry and triangle inequality.

\myparagraph{KL as a loss function.} Indeed, KL is often used as a loss function guiding the training of various machine learning models. In short, the above property allows it to penalize certain types of mispredictions, leading to quicker training. 

Let us consider a classification task, in which we train a model in assigning labels to points. Let $X \subset \mathbb{R}^d$ and be the training point cloud and $Y \subset \mathbb{R}^l$ the corresponding correct labels encoded as probability vectors. In the simplest case (of \emph{hard labels}), correct label $0 \leq i < l$ is simply a vector with a $1$ at the $i$-th position. 

Consider a function $pred_w : \mathbb{R}^d \to \mathbb{R}^m$, which we think of as a model returning predictions depending on a parameter vector $w$. Training of such a model is often set up as (as an attempt at) minimization of the loss function of the form
$\sum_{x,y \in X \times Y} KL(y, pred_w(x))$ with respect to $w$. 

\begin{table}
\centering
    \caption{Kd-tree compared to SWG from NMSLIB for 1-NN and 10-NN for 10,000 queries.}
\begin{tabular}{l l r r r r r r}\toprule
\multicolumn{2}{c}{} & \multicolumn{2}{c}{trn$_{2}\to$tst$_{1}$} & \multicolumn{2}{c}{Corel 64} & \multicolumn{2}{c}{CIFAR10}\\
\multicolumn{2}{c}{} & \multicolumn{1}{c}{1-NN} & \multicolumn{1}{c}{10-NN} & \multicolumn{1}{c}{1-NN} & \multicolumn{1}{c}{10-NN} & \multicolumn{1}{c}{1-NN} & \multicolumn{1}{c}{10-NN}\\
\cmidrule(lr){3-3}\cmidrule(lr){4-4}\cmidrule(lr){5-5}\cmidrule(lr){6-6}\cmidrule(lr){7-7}\cmidrule(lr){8-8}
\multirow{4}{*}{KL} & Kd query      & 3.06s & 5.76s & 18.26s & 29.60s & 0.30s & 0.45s \\\cmidrule(lr){3-8}
                    & SWG build      & \multicolumn{2}{c}{4.84s} & \multicolumn{2}{c}{6.99s} & \multicolumn{2}{c}{2.31s} \\
                    & SWG query      & 1.38s & 1.32s & 1.76s & 1.70s & 0.57s & 0.59s \\
                    & SWG avg recall & 0.927 & 0.928 & 0.999 & 0.998 & 0.998 & 0.998 \\\midrule\midrule
\multirow{4}{*}{IS} & Kd query      & 24.95s & 29.29s & 67.95s & 76.03s & 1.31s & 2.05 \\\cmidrule(lr){3-8}
                    & SWG build      & \multicolumn{2}{c}{8.84s} & \multicolumn{2}{c}{17.13s} & \multicolumn{2}{c}{2.90s} \\
                    & SWG query      & 2.47s & 2.31s & 4.38s & 4.05 & 0.74s & 0.85s \\
                    & SWG avg recall & 0.902 & 0.883 & 0.945 & 0.900 & 0.992 & 0.991 \\\bottomrule
\end{tabular}
    \label{tab:NN_timings_only}
\end{table}

We present more computations of Bregman--Hausdorff divergences. Namely on tst$_{1}\to$trn$_{1}$, $\triangle^{99}$, and $\square{10}$ shown in~\cref{tab:more_Bregman_hausdorff}. We also include dual Bregman computations, which our implementation supports without any structural change.
\begin{table}
    \centering
    \caption{Additional Bregman Hausdorff computations.}
\begin{tabular}{ c r r r r r r r}\toprule
 & \multicolumn{1}{c}{tst$_{1}\to$trn$_{1}$}  &\multicolumn{1}{c}{$\triangle^{99}$} & \multicolumn{1}{c}{$\square^{100}$} \\\cmidrule(lr){2-2}\cmidrule(lr){3-3}\cmidrule(lr){4-4}
KL Hausdorff      & 3.79b     & 22.65b  & 15.70\\
IS Hausdorff      & 2181.81    & 63.77  & 63.77\\
Dual IS Hausdorff & 22,525.13  & 364.21 & 364.21\\
SqEuc Hausdorff   & 0.26        & 11.84 & 11.842\\\bottomrule
%Dual KL Hausdorff & 1.22 & 22.12 & 22.12\\
\end{tabular}
    \label{tab:more_Bregman_hausdorff}
\end{table}

\section{Connectedness of Bregman balls}
\label{conn}
We further comment on properties of Bregman balls to rule out potentially strange behavior, particularly lack of connectedness.

Primal Bregman balls may be nonconvex (when viewed as a subset of Euclidean space), but are connected (in the topology induced by the ambient Euclidean space). One way to see this is by considering the Legendre transform of a primal Bregman ball, in analogy to the proof of contractility of nonempty intersections of primal Bregman balls~\cite{DBLP:journals/corr/EdelsbrunnerW16}. Its image is a dual Bregman ball (for some other Bregman divergence, potentially over a different convex domain) which is known to be convex hence connected~\cite{Rockafellar+1970}. Since the transform is a homeomorphism, a primal ball is connected as a homeomorphic image of a connected set.
% Examples

\section{The curious case of KL on the simplex.}
For the Euclidean form of the NN algorithm, given a hyperplane $P$ and a point $x$
% Hyperplane $P \cap \Triangle^{d-1}$
Let $P\subset \mathbb{R}^{n}$ be an axis-aligned hyperplane with first coordinate fixed at $x_0$ without loss of generality. Let $q\in\Delta^n$ denoted by $q=(q_0,q_1,\cdots, q_n)$. Then the KL-divergence from $q$ to a point $x\in P\cap\Delta^n$ is given by a function over $x_1,\dots,x_n$, $$f(x_1,\dots,x_n)=D_{KL}(q\| x)=\sum\limits_{i=1}^{n}q_i\log\left(\frac{q_i}{x_i}\right)$$
Then $\operatornamewithlimits{argmin}\limits_{x\in P\cap \Delta^n}D_{KL}(q\| x)$ can be computed through Lagrange multipliers with constraint $-1+x_0+\sum\limits_{i=1}^{n}x_i=0$. This results in the following system$$\begin{cases}
        \frac{-q_i}{x_i}=\lambda &n = 2,\dots,n\\
        -1+x_1+\sum\limits_{i=2}^{n}x_i=0
    \end{cases}.$$
Solving each of the top equations gives, $x_i=\frac{-q_i}{\lambda}$, which when substituted into the second constraint equation simplifies to $-1+x_0-\frac{1}{\lambda}\sum\limits_{i=1}^{n}q_i=0$. Thus, solving for the $i^th$ component of $x$,$$x_i=q_i\frac{1-x_0}{\sum\limits_{i=1}^{n}q_i}.$$
Generalizing for an arbitrary axis-aligned hyperplane $P=\{x\in\mathbb{R}^{n}\ :\ x_j\text{ is fixed}\}$, we have that 
$$\operatornamewithlimits{argmin}\limits_{x\in P\cap\Delta^n}D_{KL}(q\|x)=\left(q_0\frac{1-x_j}{1-q_j},\,\dots,\,q_{j-1}\frac{1-x_j}{1-q_j},\,x_j,\,q_{j+1}\frac{1-x_j}{1-q_j},\,\dots,\,q_{n}\frac{1-x_j}{1-q_j}\right)$$ where the substitute $\sum\limits_{\substack{i=0\\i\neq j}}^nq_i=1-q_j$ can be made for closer resemblance to a Bregman Projection onto a subsimplex of $\Delta^n$.

Now let $q_p$ be the Bregman projection to the hyperplane $H$ with coordinate $i$ fixed. The vector from $q$ to the $i^{th}$ corner of $\Delta^n$ is given by $v_{qc}=( q_1,\dots,q_{i-1}, q_i-1,q_{i+1},\dots,q_n)$ and the vector from $q$ to $q_p$ is given by $v_{qH}=( q_1-q_1\frac{1-x_i}{1-q_i},\dots,q_i-x_i,\dots,q_n-q_n\frac{1-x_i}{1-q_i})$. Looking at coordinate index $j\neq i$ without loss of generality,\begin{align*}
    q_j-q_j\frac{1-x_i}{1-q_i}&=q_j\left(\frac{1-x_i}{1-q_i}\right)\\
    &=q_j\left(\frac{x_i-q_i}{1-q_i}\right).
\end{align*}
Factoring $\frac{x_i-q_i}{1-q_i}$ from the $\vec{v}_{qH}$ gives $\vec{v}_{qH}=\left(\frac{x_i-q_i}{1-q_i}\right)\vec{v}_{qc}$. So $q_p$ lies on the intersection of the line from $q$ to the $i^{th}$ corner of $\Delta^{d-1}$ with the hyperplane.